\numberwithin{equation}{section}
\DeclareMathOperator{\supp}{supp}
\DeclareMathOperator{\sgn}{sgn}
\newtheorem{theorem}{Theorem}
\newtheorem{lemma}{Lemma}
\newtheorem{corollary}{Consequence}
\theoremstyle{remark}
\newtheorem{remark}{Remark}
\theoremstyle{definition}
\newtheorem{assumption}{Assumptions}
\newcommand{\dda}{{\rm d}}
\newcommand{\dd}{\,\dda}
\newcommand{\ddd}[1]{\dd^{#1}}
\newcommand{\ee}{{\rm e}}
\begin{document}

\title{On contact interactions as limits of short-range potentials}

\author{Gerhard Br\"aunlich$^1$, Christian Hainzl$^1$ and Robert Seiringer$^2$
  \\ \\  $^1$Mathematical Institute, University of T{\"u}bingen \\ Auf der Morgenstelle 10, 72076 T\"ubingen, Germany \\ \\
  $^2$Institute of Science and Technology Austria\\ Am Campus 1, 3400 Klosterneuburg, Austria}

\date{\today}

\maketitle

\begin{abstract}
We reconsider the norm resolvent limit of $-\Delta + V_\ell$ with
$V_\ell$ tending to a point interaction in three dimensions. We are mainly interested in potentials $V_\ell$ modelling short range interactions
of cold atomic gases.
In order to ensure stability the interaction $V_\ell$ is
required to have a strong repulsive core, such that $\lim_{\ell \to 0}
\int V_\ell >0$. This situation is not covered in the previous
literature.
\end{abstract}

\section{Introduction}
\label{sec:introduction}

Quantum mechanical systems with contact interaction, or point
interaction, are treated extensively in the physics literature, in
connection with problems in atomic, nuclear and solid state physics.
In two and three dimensions such point interaction Hamiltonians have
to be defined carefully for the simple reason that the Dirac
$\delta$-function is not relatively form-bounded with respect to the
kinetic energy described by the Laplacian $-\Delta$. Mathematically
this can be overcome by removing the point of interaction from the
configuration space and extending $-\Delta$ to a self-adjoint
operator. This leads to a one-parameter family of extensions in two and
three dimensions. One way to pick out the physically relevant
extension is by approximating the contact interaction by a sequence of
corresponding short range potentials $V_\ell (x)$ such that the range
$\ell$ converges to zero, but the {\em scattering length} $a(V_\ell) $
has a finite limit $a\in \mathbb{R}$. The corresponding self-adjoint
extension is then uniquely determined by this $a$. 

The mathematical
analysis of such problems is extensively studied in the book of
Albeverio, Gesztesy, Hoegh-Krohn and Holden \cite{albeverio}.  Among
other things the authors show that $-\Delta + V_\ell$ converges in
norm resolvent sense to an appropriate self-adjoint extension of the
Laplacian determined by $a$. As one of their implicit assumptions the
$L^1$-norm of $V_\ell$ goes to zero in the limit $\ell \to 0$. This
assumption can be too restrictive for applications, however, as
explained in \cite{BHS}.

Indeed, one major area of physics where contact interactions play
a significant role are {\em cold atomic gases}, see e.g. \cite{Leggett, randeria, BHS}.
The corresponding BCS gap equation has a particularly simple form in this case.   However,
in order to prevent such a Fermi gas from collapsing and to ensure 
stability of matter, the contact interaction has to arise from
potentials $V_\ell$ which have a large repulsive core (such that
$\lim_{\ell \to 0} \int V_\ell > 0$) in addition to an attractive tail. The
strength of the attractive tail depends on the system under
consideration, ranging from a weakly interacting superfluid (where $-\Delta +
V_\ell \geq 0$) \cite{HS,HSch} to a strongly interacting 
gas of tightly bound fermion pairs (where $-\Delta + V_\ell$  typically has one
negative eigenvalue) \cite{randeria, HHSS, FHSS-micro_ginzburg_landau}.
 
Having such systems in mind, the present paper is dedicated to the
study of contact interactions arising as a limit of short range
potentials $V_\ell$ with large positive core and, in particular, a non-vanishing and positive integral $\int
V_\ell$ in the limit $\ell\to 0$. The simplest form of such a $V_\ell$ we can think of is depicted
in Figure~\ref{fig:potential}. 
We shall generalize a result of \cite{albeverio82,albeverio} and show that for $ \Im
(k) > 0$ 
$$ \frac 1{- \Delta + V_\ell - k^2} \stackrel{\ell\to 0}{\longrightarrow} \frac{1}{-\Delta-k^2} - \frac{4\pi}{a^{-1}+ik}|g_k\rangle
\langle g_k|
$$  in norm, where  $g_k(x) = \frac{1}{4\pi}\frac{\ee^{ik|x|}}{|x|}$, and $a = \lim_{\ell \to 0} a(V_\ell)$ is the limiting scattering length.
The main mathematical obstacle we have to overcome is the fact that
the corresponding Birman-Schwinger operators are ``very''
non-self-adjoint, which requires a refined analysis to get a hand on
the corresponding norms. 
One important ingredient of our analysis is a useful formula for the
scattering length $a(V_\ell)$ which was recently derived in \cite{HS-mu}.

\section{Main Results}
\label{sec:results}

In the following, we shall consider a family $(V_\ell)_{\ell>0}$ of
real-valued functions in $L^1(\mathbb{R}^3) \cap
L^{3/2}(\mathbb{R}^3)$.
We use the notation $V^{1/2}(x) = \sgn(x)|V(x)|^{1/2}$ and write
$V_\ell^\pm$ for the positive and negative part of the potential
$V_\ell$ in the decomposition
\begin{equation*}
  V_\ell = V_\ell^+ - V_\ell^-,\qquad \supp(V_\ell^+) \cap
  \supp(V_\ell^-) = \emptyset.
\end{equation*}
Further we will abbreviate
\begin{align*}
  J_\ell &= \big\{
  \begin{smallmatrix}
    1,& V_\ell \geq 0\\
    -1,& V_\ell <0,
  \end{smallmatrix}
  \qquad \textrm{i.e. } J_\ell(x) = \sgn\big(V_\ell(x)\big),\\
  X_\ell &= |V_\ell|^{1/2}\frac{1}{p^2}|V_\ell|^{1/2}, \qquad
  X_\ell^- = (V_\ell^-)^{1/2}\frac{1}{p^2}(V_\ell^-)^{1/2},
\end{align*}
so that the Birman-Schwinger operator reads
\begin{equation}\label{bso}
B_\ell := V_\ell^{1/2}\tfrac{1}{p^2}|V_\ell|^{1/2} = J_\ell X_\ell\,.
\end{equation}

For a given real-valued potential $V \in L^1(\mathbb{R}^3)
\cap L^{3/2}(\mathbb{R}^3)$, it was shown in \cite{HS-mu} that the
scattering length can be expressed via
\begin{equation}
  \label{eq:a}
  a(V) = \frac{1}{4\pi} \Bigl< |V|^{1/2}\Big| \tfrac{1}{1+V^{1/2}
    \frac{1}{p^2}|V|^{1/2}} V^{1/2}\Bigr>.
\end{equation}
This assumes that $1+V^{1/2}
    \frac{1}{p^2}|V|^{1/2}$ is invertible, otherwise $a(V)$ is infinite.

Throughout the  paper we will use the notation
\begin{equation*}
  f = O(g) \Leftrightarrow 0 \leq \limsup_{\ell \to 0}
  \left|\frac{f(\ell)}{g(\ell)}\right| < \infty.
\end{equation*}
For our main theorem we will need to make the following assumptions. 

\begin{assumption}
  \label{asm:assumption}
  \begin{enumerate}[label=(A\arabic*)]
  \item $(V_\ell)_{\ell>0} \in L^{3/2}(\mathbb{R}^3) \cap
    L^1\big(\mathbb{R}^3,(1+|x|^2)dx\big)$.    \label{ax:Lp}
  \item There are sequences $e_\ell, e_\ell^- \in \mathbb{R}$ such
    that $e_\ell \neq 0$, $\lim_{\ell\to 0} e_\ell = 0$, $e_\ell^- = O(e_\ell)$ and
    \begin{equation*}
      -\Delta + \lambda V_\ell
      \quad \textrm{and}\quad
      -\Delta - \lambda^- V_\ell^-
    \end{equation*}
    have non-degenerate zero-energy resonances for $\lambda = (1-e_\ell)^{-1}$ and
    $\lambda^- = (1-e_\ell^-)^{-1}$, respectively. 
    All other $\lambda, \lambda^- \in\mathbb{R}$ for which $-\Delta +
    \lambda V_\ell$ and $-\Delta - \lambda^- V_\ell^-$ have zero-energy 
    resonances are separated from $1$ by a gap of order $1$.
    \label{ax:e}
  \item $\|V_\ell\|_{L^1}$ is uniformly bounded in $\ell$ and
    $\|V_\ell^-\|_{L^1} = O(e_\ell)$,\label{ax:V}
  \item $\displaystyle \int_{\mathbb{R}^3} |V_\ell(x)|\,|x|^2 \ddd{3}x
    = O(e_\ell^2)$ and $\displaystyle \int_{\mathbb{R}^3}
    |V_\ell^-(x)|\,|x|^2 \ddd{3}x = O(e_\ell^3)$,\label{ax:V-x}
  \item the limit
    \begin{equation}\label{def:a}
      a = \lim_{\ell\to 0} a(V_\ell)
      = \frac{1}{4\pi}\lim_{\ell\to 0}
      \big\langle
      |V_\ell|^{1/2}\big| (1+B_\ell)^{-1}  V_\ell^{1/2} \big\rangle
    \end{equation}
    exists and is finite.\label{a5as}
\end{enumerate}
\end{assumption}

\begin{remark}
  Assumption~\ref{ax:e} can be reformulated in terms of the corresponding
  Birman-Schwinger operators. Recall that $-\Delta + \frac{1}{1-e} V$ has
  a zero-energy resonance if and only if $1+V^{1/2}\frac{1}{p^2}|V|^{1/2}$ has an eigenvalue $e$. 
  Therefore \ref{ax:e} is equivalent to the following assumption:

  \begin{enumerate}[label=(A\arabic*)]
    \item[\ref{ax:e}']
    The lowest eigenvalues $e_\ell$ and $e_\ell^-$ of the
     operators $1+J_\ell X_\ell$ and
    $1-X_\ell^-$, 
     respectively, are non-degenerate, 
    converge to $0$ as $\ell \to 0$, with $ e_\ell^-=O(e_\ell) $,  and all other eigenvalues
    are isolated from $0$ by a gap of order $1$.
  \end{enumerate}
The fact that $e_\ell \neq 0$ means that $1+B_\ell$ is invertible. For simplicity, we also assume that the limit in (\ref{def:a}) is finite. We expect our result to be true also for $a=\infty$, but the proof has to be suitably modified in this case. 
\end{remark}

We are now ready to state our main theorem.
\begin{theorem}
  \label{thm:convergence}
  Let $(V_\ell)_{\ell>0}$ be a family of real-valued functions
  satisfying Assumptions~\ref{asm:assumption}.
  Then, as $\ell \to 0$,
  \begin{equation}
    \label{eq:resolvent:2}
    \frac{1}{-\Delta + V_\ell - k^2} \rightarrow
    \frac{1}{-\Delta-k^2}
    - \frac{4\pi}{a^{-1}+ik}|g_k\rangle\langle g_k|,
    \qquad \Im\, k > 0\, , \ k\neq i/a
  \end{equation}
  in norm, where $g_k(x) = \frac{1}{4\pi}\frac{\ee^{ik|x|}}{|x|}$.
\end{theorem}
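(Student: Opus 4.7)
The strategy is a Birman--Schwinger reduction followed by extraction of the rank-one singular part of $(1+B_\ell(k))^{-1}$ and identification of the surviving scalar factor via~\eqref{eq:a}. Setting $B_\ell(k):=|V_\ell|^{1/2}(-\Delta-k^2)^{-1}V_\ell^{1/2}$, the second resolvent identity gives
\begin{equation*}
\frac{1}{-\Delta+V_\ell-k^2}-\frac{1}{-\Delta-k^2}
=-\frac{1}{-\Delta-k^2}V_\ell^{1/2}\,\frac{1}{1+B_\ell(k)}\,|V_\ell|^{1/2}\frac{1}{-\Delta-k^2},
\end{equation*}
so the theorem reduces to showing that the right-hand side converges in operator norm to $-\tfrac{4\pi}{a^{-1}+ik}|g_k\rangle\langle g_k|$.

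Although $B_\ell=J_\ell X_\ell$ is non-self-adjoint, the identity $B_\ell^*=J_\ell B_\ell J_\ell$ forces the left eigenvectors to be $J_\ell$ times the right ones, and $B_\ell$ is similar on the range of $X_\ell^{1/2}$ to the self-adjoint $X_\ell^{1/2}J_\ell X_\ell^{1/2}$; hence its spectrum is real. Assumption~\ref{ax:e} then provides a simple eigenvalue $e_\ell\to 0$ of $1+B_\ell$ with right eigenfunction $\psi_\ell$, separated by an order-one gap from the rest of the spectrum, giving
\begin{equation*}
(1+B_\ell)^{-1}=\tfrac{1}{e_\ell}P_\ell+R_\ell,\qquad P_\ell=\tfrac{|\psi_\ell\rangle\langle J_\ell\psi_\ell|}{\langle J_\ell\psi_\ell,\psi_\ell\rangle},\ \|R_\ell\|=O(1).
\end{equation*}
Taylor expanding $\tfrac{\ee^{ik|x-y|}-1}{4\pi|x-y|}=\tfrac{ik}{4\pi}+O(|x-y|)$ and using the second-moment bound~\ref{ax:V-x}, one finds that $B_\ell(k)-B_\ell$ equals $\tfrac{ik}{4\pi}|\,|V_\ell|^{1/2}\rangle\langle V_\ell^{1/2}|$ up to an error vanishing in operator norm. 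Regular perturbation theory then provides the analogue of the above decomposition for $(1+B_\ell(k))^{-1}$, with small eigenvalue $e_\ell(k)=e_\ell+\tfrac{ik}{4\pi}\tfrac{\langle V_\ell^{1/2},\psi_\ell\rangle^2}{\langle J_\ell\psi_\ell,\psi_\ell\rangle}+o(e_\ell)$.

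For the geometric part, concentration of $V_\ell$ at the origin together with~\ref{ax:V-x} gives $(-\Delta-k^2)^{-1}V_\ell^{1/2}\psi_\ell\to g_k\langle V_\ell^{1/2},\psi_\ell\rangle$ in $L^2$, and analogously on the left. Sandwiching the pole of $(1+B_\ell(k))^{-1}$ thereby produces a rank-one operator of the form $\tfrac{\langle V_\ell^{1/2},\psi_\ell\rangle^2}{e_\ell(k)\langle J_\ell\psi_\ell,\psi_\ell\rangle}|g_k\rangle\langle g_k|$ plus a norm-vanishing remainder from $R_\ell(k)$. Applying the same spectral decomposition to~\eqref{eq:a} gives $4\pi a(V_\ell)=\tfrac{\langle|V_\ell|^{1/2},\psi_\ell\rangle^2}{e_\ell\langle J_\ell\psi_\ell,\psi_\ell\rangle}+O(1)$; the two numerators $\langle V_\ell^{1/2},\psi_\ell\rangle^2$ and $\langle|V_\ell|^{1/2},\psi_\ell\rangle^2$ differ only by $O(e_\ell)$ thanks to $\|V_\ell^-\|_{L^1}=O(e_\ell)$ in~\ref{ax:V}, so~\ref{a5as} and the eigenvalue shift above together yield the limit $\tfrac{4\pi a}{1+ika}=\tfrac{4\pi}{a^{-1}+ik}$; the hypothesis $k\neq i/a$ is precisely what ensures this limit is finite.

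The principal obstacle is the strong non-self-adjointness of $B_\ell$: uniform control over $P_\ell$, $R_\ell$, and the denominator $\langle J_\ell\psi_\ell,\psi_\ell\rangle$ is not a priori available, and standard analytic perturbation theory does not supply ready-made quantitative error bounds. The symmetry $B_\ell^*=J_\ell B_\ell J_\ell$, the similarity to $X_\ell^{1/2}J_\ell X_\ell^{1/2}$, and the combination of~\ref{ax:V} (uniform $L^1$-boundedness of $V_\ell$ together with $L^1$-smallness of its negative part) with the second-moment assumption~\ref{ax:V-x} are what must be leveraged to make all the operator-norm estimates above quantitative and uniform in $\ell$.
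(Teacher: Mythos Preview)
Your overall architecture matches the paper's: the Birman--Schwinger identity, the splitting $(1+B_\ell)^{-1}=e_\ell^{-1}P_\ell+\text{(uniformly bounded)}$, and the identification of the scalar via~\eqref{eq:a} are exactly the ingredients used. The bound on the non-pole part is the paper's Consequence~\ref{corollary:projection-bound} (proved from Lemma~\ref{lemma:BS-estimate}), and the asymptotics $\langle J_\ell\psi_\ell,\psi_\ell\rangle=-1+O(e_\ell)$, $\langle|V_\ell|^{1/2},\psi_\ell\rangle=O(|e_\ell|^{1/2})$ that you implicitly need are its Consequence~\ref{corollary:V-phi}.

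Where you diverge is in the passage from $(1+B_\ell)^{-1}$ to $(1+B_\ell(k))^{-1}$. You invoke ``regular perturbation theory'' to track a shifted eigenvalue $e_\ell(k)$, but the rank-one perturbation $\tfrac{ik}{4\pi}|V_\ell^{1/2}\rangle\langle|V_\ell|^{1/2}|$ has norm of order $\|V_\ell\|_{L^1}=O(1)$, \emph{not} $o(1)$, so relative to the $O(1)$ spectral gap of the non-self-adjoint $1+B_\ell$ there is no off-the-shelf perturbation theorem yielding your claimed decomposition with the error control $o(e_\ell)$; you yourself flag exactly this at the end. The paper avoids the issue by an \emph{exact} algebraic step (Lemma~\ref{lemma:convergence:1}): it peels off the rank-one piece via Sherman--Morrison, which makes the factor $\bigl(\tfrac{4\pi}{ik}+4\pi a(V_\ell)\bigr)^{-1}$ appear directly from~\eqref{eq:a} without ever locating $e_\ell(k)$. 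A second, related point: saying the Taylor remainder ``vanishes in operator norm'' gives only $\|R_\ell\|=O(e_\ell)$, which after multiplication by $(1+B_\ell)^{-1}=O(e_\ell^{-1})$ is merely $O(1)$. The paper gets the required $O(|e_\ell|^{1/2})$ through the finer estimates $\|P_\ell R_\ell\|=O(|e_\ell|^{3/2})$ and $|\langle J_\ell\phi_\ell|R_\ell\phi_\ell\rangle|=O(e_\ell^2)$, which rest on Consequence~\ref{corollary:V-phi}\ref{corollary:V-phi:2}--\ref{corollary:V-phi:3}. Your perturbative route would need the same cancellations, and it is not evident they would surface without essentially reproducing those lemmas.
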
 

\begin{remark}
  The simplest example of potentials satisfying Assumptions \ref{asm:assumption} 
  is shown in Figure~\ref{fig:potential}.
  \begin{figure}
    \centering
    \includegraphics[width=0.15\linewidth]{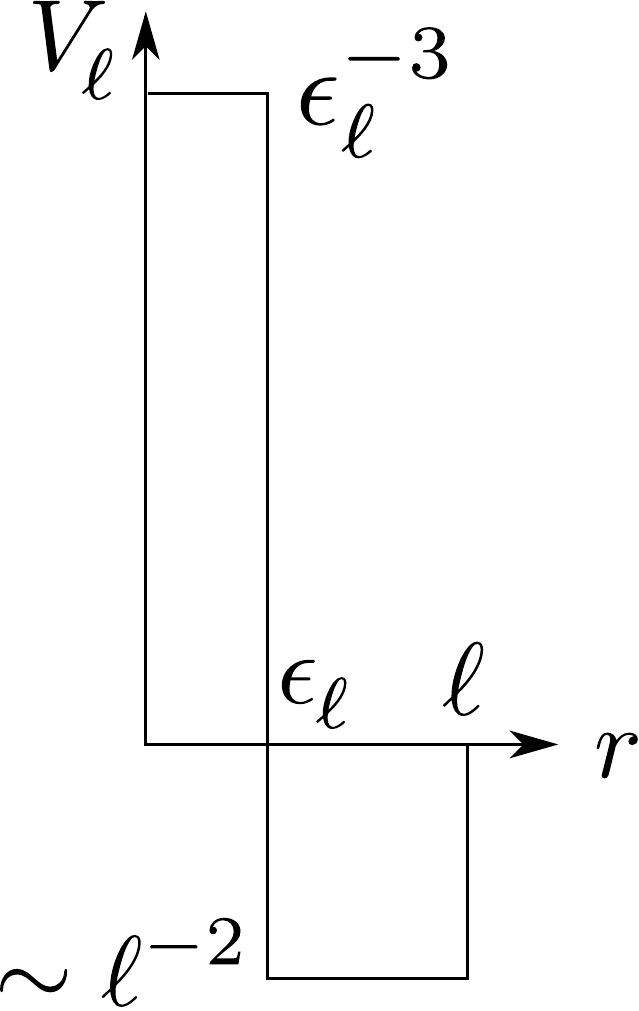}
    \caption{Example of a sequence  of potentials $V_\ell$.}
    \label{fig:potential}
  \end{figure}
  By simple calculations it is immediate to see that  \ref{ax:Lp}, \ref{ax:V} and
  \ref{ax:V-x} hold, with $e_\ell = O(\ell)$.  By fine tuning the strength of the negative part of $V_\ell$, it is
  possible to meet a resonance condition such that  \ref{ax:e} holds. The corresponding scattering length can be calculated explicitly in this case, verifying \ref{a5as}. (See \cite[Appendix]{BHS} for such explicit calculations in the case where $\epsilon_\ell \ll \ell$.)
  \end{remark}
  
\begin{remark}
In case $a=0$, the fraction $(a^{-1} + ik)^{-1}$ has to be interpreted as $0$.
\end{remark}  
  
\begin{remark}
  One consequence of Theorem~\ref{thm:convergence} is that in the case $0<a<\infty$ the smallest eigenvalue of 
  $-\Delta + V_\ell$ converges to $-\tfrac{1}{a^2}$, with the eigenfunction tending to $\sqrt{\frac {2\pi}a} g_{i/a}(x)$ in $L^2$. Moreover, all other eigenvalues necessarily tend to $0$. 
  
\end{remark}
\begin{remark}
  The resolvent on the right side of \eqref{eq:resolvent:2}
  belongs to a Hamiltonian of a special
  point interaction centered at the origin.
  More precisely, for $\theta\in[0,2\pi)$ let $\big(H_\theta, D(H_\theta)\big)$ be
  the self-adjoint extension of the kinetic energy
  Hamiltonian
  \begin{equation*}
    \left.-\Delta\right|_{H_0^{2,2}(\mathbb{R}^3\setminus \{0\})}
  \end{equation*}
  with the domain
  \begin{equation*}
    D(H_\theta) = H_0^{2,2}(\mathbb{R}^3\setminus \{0\}) \oplus
    \langle \psi_+ + \ee^{i\theta}\psi_- \rangle,\qquad
    \psi_\pm(x) = \frac{\ee^{i\sqrt{\pm i}|x|}}{4\pi |x|},\qquad
    x\in \mathbb{R}^3\setminus \{0\}, \Im(\sqrt{\pm i}) > 0,
  \end{equation*}
  such that $H_\theta(\psi_+ + \ee^{i\theta}\psi_-) = i\psi_+ -
  i\ee^{i\theta}\psi_-$.
  Then, if $\theta$ is chosen such that
  \begin{equation*}
    -a^{-1} = \cos(\pi/4)\big(\tan(\theta/2)-1\big),
  \end{equation*}
  we have by \cite[Theorem 1.1.2.]{albeverio} that
  \begin{equation}
    \label{eq:resolvent:delta}
    \frac{1}{H_\theta - k^2} = \frac{1}{p^2-k^2} - \frac{4\pi}{a^{-1}+ik}|g_k\rangle
    \langle g_k|.
  \end{equation}
 Hence Theorem~\ref{thm:convergence} implies that
  the operator $-\Delta + V_\ell$
  converges to $(H_\theta, D(H_\theta))$ in norm resolvent sense.
\end{remark}
\begin{remark}
  Let us explain one of the main difficulties arising from potentials with large
  $L^1$-core compared to the situation treated in \cite{albeverio},
  where the $L^{3/2}$-norm of $V_\ell$ is uniformly bounded and hence the $L^1$-norm
  tends to zero.
One of the necessary tasks in the proof of Theorem
  \ref{thm:convergence} is to
  bound the inverse of operator $1+B_\ell$, where $B_\ell$ denotes the Birman-Schwinger operator defined in (\ref{bso}). 
One way to bound the norm of this non-self-adjoint operator is
to use the
identity
$$\frac 1 {1+ B_\ell}  = 1 - J_\ell X_\ell^{1/2} \frac 1 {1+ X_\ell^{1/2}J_\ell X_\ell^{1/2}}  X_\ell^{1/2} ,$$
which implies for its norm
$$ \left \| \frac 1 {1+ B_\ell}  \right \| \leq 1 + \|X_\ell\| \left \|   \frac 1 {1+ X_\ell^{1/2}J_\ell X_\ell^{1/2}}  \right \|.
$$ 
The Hardy-Littlewood-Sobolev inequality implies that $\| X_\ell\|$ can be bounded by a constant times $ \|V_\ell\|_{L^{3/2}}$, and with $ X_\ell^{1/2}J_\ell
X_\ell^{1/2}$ being isospectral to $B_\ell=J_\ell X_\ell$ we obtain that
\begin{equation}
  \label{opin}  \left \| \frac 1 {1+ B_\ell} \right \|
  \leq 1 +C \frac 1{|e_\ell|} \|V_\ell\|_{L^{3/2}} \,.
\end{equation}
This shows that $\left \| (1+ B_\ell)^{-1} \right \| \leq O(|e_\ell|^{-1})$ for sequences $V_\ell$ used in \cite{albeverio}, which turns out to be sufficient. 
However, in our present situation we are dealing with potentials
$V_\ell$ with strong repulsive core satisfying Assumptions~\ref{asm:assumption}, where 
the corresponding  $L^{3/2}$ norm diverges and  typically is of the order of 
$O(1/|e_\ell|)$.  Hence the inequality \eqref{opin} only implies a
bound of the form
 $$\left \| \frac 1 {1+ B_\ell}  \right \| \leq O\big(|e_\ell|^{-2}\big)\,, $$
 which is not good enough for our purpose. To this aim we have to
 perform a more refined analysis. 
 \end{remark}
\begin{remark}
For related work on two-scale limits in one-dimensional systems, see, e.g., \cite{Exner, Gol}.
\end{remark}

The following lemma turns out to be very useful in the proof of Theorem~\ref{thm:convergence}.

\begin{lemma}
  \label{lemma:BS-estimate}
  Let $V = V_+ - V_-$, where $V_-, V_+\geq 0$ have disjoint
  support. Denote $J = \big\{
  \begin{smallmatrix}
    1,& V \geq 0\\
    -1,& V <0
  \end{smallmatrix}$, $\quad$ $X = |V|^{1/2}\frac{1}{p^2}|V|^{1/2}$
  and $X_\pm = V_\pm^{1/2}\frac{1}{p^2}V_\pm^{1/2}$.  Then for any
  $\phi \in L^2(\mathbb{R}^3)$, we have
  \begin{equation}
    \label{eq:B-S-inequality}
    \sqrt{2}\  \|\phi\|_{L^2} \| (J+X)\phi \|_{L^2}
    \geq \langle \phi| (X_+ +1 - X_-)\phi\rangle.
  \end{equation}
\end{lemma}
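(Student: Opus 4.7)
The plan is to exploit the fact that $J$ is a self-adjoint unitary (it is multiplication by $\pm 1$) and factor $J+X = J(1+JX)$. Since $J$ acts isometrically on $L^2$, this gives $\|(J+X)\phi\|_{L^2} = \|(1+JX)\phi\|_{L^2}$, so it suffices to bound $\|\phi\|\,\|(1+JX)\phi\|$ from below by $\langle\phi,(X_++1-X_-)\phi\rangle$. I will do this by applying Cauchy--Schwarz to $\langle\phi,(1+JX)\phi\rangle$ after identifying $X_++1-X_-$ as its real (symmetric) part.

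The key algebraic input is that the disjoint supports of $V_\pm$ give $J|V|^{1/2}=V_+^{1/2}-V_-^{1/2}$, hence
\[
JX=(V_+^{1/2}-V_-^{1/2})\tfrac{1}{p^2}(V_+^{1/2}+V_-^{1/2})=X_+-X_-+Y-Y^*,
\]
where $Y:=V_+^{1/2}\tfrac{1}{p^2}V_-^{1/2}$. The cross term $Y-Y^*$ is anti-self-adjoint, so $JX+XJ=2(X_+-X_-)$, and hence
\[
\Re\langle\phi,(1+JX)\phi\rangle=\|\phi\|^2+\tfrac{1}{2}\langle\phi,(JX+XJ)\phi\rangle=\langle\phi,(X_++1-X_-)\phi\rangle.
\]

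Combining this with $\|\phi\|\,\|(1+JX)\phi\|\geq|\langle\phi,(1+JX)\phi\rangle|\geq\Re\langle\phi,(1+JX)\phi\rangle$ (Cauchy--Schwarz followed by $|z|\geq\Re z$) finishes the proof; note that this argument in fact gives the slightly stronger statement without the factor $\sqrt{2}$, and one may always assume $\langle\phi,(X_++1-X_-)\phi\rangle\geq 0$ since otherwise \eqref{eq:B-S-inequality} is trivial. The only conceptual hurdle is spotting the unitarity of $J$, which allows the problem to be transported from $J+X$ to $1+JX$; once this is done, the disjoint-support identity makes $X_++1-X_-$ emerge as the self-adjoint part of $1+JX$ and the remainder is routine.
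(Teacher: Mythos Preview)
Your proof is correct and genuinely different from the paper's. The paper decomposes $\phi=\phi_++\phi_-$ according to the sign of $V$, applies Cauchy--Schwarz separately to $\langle\phi_+,(J+X)\phi\rangle$ and $\langle-\phi_-,(J+X)\phi\rangle$, adds the two so that the off-diagonal terms $V_+^{1/2}\tfrac{1}{p^2}V_-^{1/2}$ cancel, and then uses $\|\phi_+\|+\|\phi_-\|\le\sqrt{2}\,\|\phi\|$, which is the sole source of the $\sqrt{2}$. Your route instead exploits that $J$ is a self-adjoint unitary to write $\|(J+X)\phi\|=\|(1+JX)\phi\|$ and then identifies $1+X_+-X_-$ as the self-adjoint part of $1+JX$; a single Cauchy--Schwarz then gives the inequality with constant $1$ in place of $\sqrt{2}$. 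Your argument is shorter and yields the sharper bound. One minor remark: the paper later quotes the intermediate inequality (its (\ref{eq:BS-estimate:2})) in the proof of Consequence~\ref{corollary:V-phi}\ref{corollary:V-phi:1}, but since $\langle\phi,(1+X_+-X_-)\phi\rangle=\langle\phi_+,(1+X_+)\phi_+\rangle+\langle\phi_-,(1-X_-)\phi_-\rangle$ your stronger inequality covers that application as well.
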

\begin{proof}[Proof of Lemma~\ref{lemma:BS-estimate}]
  Decompose $\phi = \phi_+ + \phi_-$, such that $\supp(\phi_-)
  \subseteq \supp(V_-)$ and $\supp(\phi_+) \cap \supp(V_-) =
  \emptyset$.  By applying twice the Cauchy-Schwarz inequality, 
  \begin{align*}
    \|(J+X)\phi\|_{L^2} \|\phi_+\|_{L^2} &\geq \Re \langle \phi_+|
    (J+X) \phi \rangle = \langle \phi_+| (1+X_+) \phi_+\rangle + \Re
    \langle \phi_+| V_+^{1/2}\tfrac{1}{p^2} V_-^{1/2} \phi_-\rangle,
    \\
    \|(J+X)\phi\|_{L^2} \|\phi_-\|_{L^2} &\geq \Re \langle (J+X)
    \phi|\ -\phi_-\rangle = \langle \phi_-| (1-X_-) \phi_-\rangle
    -\Re\langle \phi_+| V_+^{1/2}\tfrac{1}{p^2} V_-^{1/2}
    \phi_-\rangle.
  \end{align*}
  We add the two inequalities to get rid of the cross terms
  \begin{equation}
    \label{eq:BS-estimate:2}
    \|(J+X)\phi \|_{L^2}\ \big(\|\phi_+\|_{L^2} + \|\phi_-\|_{L^2}\big)
    \geq
    \langle \phi_+| (1+X_+) \phi_+\rangle
    + \langle \phi_-| (1-X_-) \phi_-\rangle
    = \langle \phi| (X_+ +1 - X_-)\phi\rangle.
  \end{equation}
  Finally, we use that $\|\phi_+\|_{L^2} + \|\phi_-\|_{L^2} \leq
  \sqrt{2} \|\phi\|_{L^2}$, which finishes the proof.
\end{proof}

One difficulty in proving Theorem~\ref{thm:convergence} is that the
 operator $1+B_\ell$ is not self-adjoint and
the norm of its inverse cannot be controlled by the spectrum. One consequence of 
 Lemma~\ref{lemma:BS-estimate}
and  our assumptions is that the norm of $(1+B_\ell)^{-1}$  diverges like
$\frac{1}{e_\ell}$. The following statement identifies the divergent term
in terms of the projection onto the eigenvector to the lowest
eigenvalue of the Birman Schwinger operator.

\begin{corollary}
  \label{corollary:projection-bound}
  Let $(V_\ell)_{\ell>0}$
  satisfy \ref{ax:Lp}-\ref{ax:V-x} in Assumptions~\ref{asm:assumption}.
  Then the operator
  \begin{equation*}
    (1+B_\ell)^{-1} (1-P_\ell)
  \end{equation*}
  is uniformly bounded in $\ell$, where
  \begin{equation}
    \label{eq:P}
    P_\ell = \frac{1}{\langle J_\ell
      \phi_\ell|\phi_\ell\rangle}|\phi_\ell\rangle \langle J_\ell
    \phi_\ell|
  \end{equation}
  with $\phi_\ell$ the eigenvector to the
  eigenvalue $e_\ell$ of $1+ B_\ell$.
\end{corollary}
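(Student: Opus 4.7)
The plan is to apply Lemma~\ref{lemma:BS-estimate} on the range of $1-P_\ell$, which turns out to be an invariant subspace of $1+B_\ell$. First I verify that $P_\ell$ commutes with $B_\ell$: from $(1+J_\ell X_\ell)\phi_\ell = e_\ell\phi_\ell$ and self-adjointness of $X_\ell$ one has $B_\ell^*(J_\ell\phi_\ell) = X_\ell\phi_\ell = (e_\ell-1)J_\ell\phi_\ell$, so both $B_\ell P_\ell$ and $P_\ell B_\ell$ reduce to $(e_\ell-1)P_\ell$. Thus $P_\ell$ is the spectral projection of $1+B_\ell$ onto the eigenvalue $e_\ell$ and commutes with $(1+B_\ell)^{-1}$. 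For any $\phi$, the vector $\chi := (1+B_\ell)^{-1}(1-P_\ell)\phi$ then satisfies $P_\ell\chi=0$ and $(1+B_\ell)\chi = (1-P_\ell)\phi$. Writing $1+B_\ell = J_\ell(J_\ell+X_\ell)$ with $J_\ell$ unitary, Lemma~\ref{lemma:BS-estimate} yields
\[
\sqrt{2}\,\|\chi\|\,\|(1-P_\ell)\phi\| \;\geq\; \langle\chi | A_\ell\chi\rangle, \qquad A_\ell := X_\ell^+ + 1 - X_\ell^-.
\]

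The operator $A_\ell$ is self-adjoint. Since $V_\ell^+, V_\ell^-$ have disjoint supports, $X_\ell^\pm$ act only on $L^2(\supp(V_\ell^\pm))$, so $A_\ell$ is block-diagonal with respect to the decomposition $L^2(\mathbb{R}^3) = L^2_+ \oplus L^2_- \oplus L^2_0$ determined by the supports of $V_\ell^+$, $V_\ell^-$ and their complement. Hence $\spec(A_\ell) \subseteq \spec(1+X_\ell^+) \cup \spec(1-X_\ell^-) \cup \{1\}$, and by Assumption~\ref{ax:e}' the bottom of the spectrum is a non-degenerate eigenvalue $e_\ell^-=O(e_\ell)$ with normalized eigenvector $\psi_\ell^-\in L^2_-$; all remaining spectrum lies above some $c>0$ independent of $\ell$. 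Decomposing $\chi = \alpha\psi_\ell^- + \eta$ with $\eta\perp\psi_\ell^-$ gives $\langle\chi|A_\ell\chi\rangle \geq e_\ell^-|\alpha|^2 + c\|\eta\|^2$. The constraint $\langle J_\ell\phi_\ell|\chi\rangle=0$, combined with $J_\ell\psi_\ell^-=-\psi_\ell^-$, yields
\[
|\alpha|\,\bigl|\langle P^-\phi_\ell\,|\,\psi_\ell^-\rangle\bigr| \leq \|\phi_\ell\|\,\|\eta\|,
\]
where $P^\pm$ denotes the projection onto $L^2_\pm$.

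The main obstacle is to establish a uniform lower bound $\bigl|\langle P^-\phi_\ell|\psi_\ell^-\rangle\bigr| \geq c_1\|\phi_\ell\|$, i.e., to show that the eigenvector $\phi_\ell$ of the non-self-adjoint operator $1+B_\ell$ asymptotically aligns with the eigenvector $\psi_\ell^-$ of the self-adjoint operator $A_\ell$. For this I would turn to the block form of the eigenvalue equation: with $\phi_\ell^\pm := P^\pm\phi_\ell$ and $Z_\ell := P^+ X_\ell P^-$, a Schur-complement computation yields
\[
\phi_\ell^+ = -(1-e_\ell+X_\ell^+)^{-1}Z_\ell\phi_\ell^-, \qquad \bigl(X_\ell^- + Z_\ell^*(1-e_\ell+X_\ell^+)^{-1}Z_\ell\bigr)\phi_\ell^- = (1-e_\ell)\phi_\ell^-,
\]
so $\phi_\ell^-$ is the top eigenvector of a non-negative perturbation of $X_\ell^-$, whose top eigenvector is $\psi_\ell^-$. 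Using Assumptions~\ref{ax:V}--\ref{ax:V-x} to make $Z_\ell$ small on $L^2_-$ (essentially because $V_\ell^-$ is small in $L^1$ with small second moment), together with the order-$1$ spectral gap of $X_\ell^-$ at its top eigenvalue, standard perturbation estimates give $|\langle\phi_\ell^-|\psi_\ell^-\rangle|/\|\phi_\ell^-\| \to 1$ and $\|\phi_\ell^+\|/\|\phi_\ell^-\| \to 0$ as $\ell\to 0$, which provides the needed lower bound.

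Combining these ingredients, $|\alpha|\leq C\|\eta\|$ for small $\ell$; the term $e_\ell^-|\alpha|^2 = O(|e_\ell|)\|\chi\|^2$ is absorbed for $\ell$ small, giving $\langle\chi|A_\ell\chi\rangle \geq c'\|\chi\|^2$, and the lemma then implies $\|\chi\| \leq (\sqrt{2}/c')\|(1-P_\ell)\phi\| \leq C'(1+\|P_\ell\|)\|\phi\|$. Finally, $\|P_\ell\| = \|\phi_\ell\|^2/|\langle J_\ell\phi_\ell|\phi_\ell\rangle|$ is uniformly bounded by the same alignment argument, since $\langle J_\ell\phi_\ell|\phi_\ell\rangle = \|\phi_\ell^+\|^2 - \|\phi_\ell^-\|^2$ is of order $-\|\phi_\ell\|^2$. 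This completes the plan.
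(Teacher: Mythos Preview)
Your overall framework is sound and mirrors the paper's: set $\chi=(1+B_\ell)^{-1}(1-P_\ell)\phi$, apply Lemma~\ref{lemma:BS-estimate}, and reduce matters to showing that $\chi$ cannot concentrate on the bottom eigenvector $\psi_\ell^-$ of $1-X_\ell^-$ because of the constraint $\langle J_\ell\phi_\ell\,|\,\chi\rangle=0$. The step where your argument diverges from the paper, and where it has a genuine gap, is the proof of the alignment $|\langle P^-\phi_\ell\,|\,\psi_\ell^-\rangle|\geq c_1\|\phi_\ell\|$.

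You propose a Schur-complement reduction and then invoke ``standard perturbation estimates'' based on the claim that $Z_\ell=(V_\ell^+)^{1/2}\tfrac{1}{p^2}(V_\ell^-)^{1/2}$ is small on $L^2_-$ thanks to (A3)--(A4). This claim is not justified. Writing $A=(V_\ell^+)^{1/2}|p|^{-1}$ and $B=(V_\ell^-)^{1/2}|p|^{-1}$, one has $Z_\ell^*Z_\ell=BA^*AB^*\leq\|X_\ell^+\|\,X_\ell^-$, and $\|X_\ell^+\|$ is of order $\|V_\ell^+\|_{L^{3/2}}$, which in the regime of interest (large repulsive core) diverges like $|e_\ell|^{-1}$. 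One can show $Z_\ell^*(1+X_\ell^+)^{-1}Z_\ell\leq X_\ell^-$, so the Schur perturbation is \emph{bounded}, but it is of the same order as the gap of $X_\ell^-$ and hence eigenvector perturbation theory does not apply without further input. Assumptions (A3)--(A4) control only $\|V_\ell^-\|_{L^1}$ and the second moments; they give no smallness for the off-diagonal block $Z_\ell$, which couples to the large $V_\ell^+$.

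The paper closes this gap by a much shorter route: it applies Lemma~\ref{lemma:BS-estimate} \emph{a second time}, now to the eigenvector $\phi_\ell$ itself. Since $\|(J_\ell+X_\ell)\phi_\ell\|=|e_\ell|$, the lemma gives
\[
\sqrt{2}\,|e_\ell|\;\geq\;\langle\phi_\ell\,|\,(1-X_\ell^-)\,\phi_\ell\rangle
=\langle J_\ell\phi_\ell\,|\,(1-X_\ell^-)\,J_\ell\phi_\ell\rangle
\;\geq\; e_\ell^-\,|\langle J_\ell\phi_\ell\,|\,\psi_\ell^-\rangle|^2
+c_1\,\|(1-P_{\psi_\ell^-})J_\ell\phi_\ell\|^2,
\]
from which $\|(1-P_{\psi_\ell^-})J_\ell\phi_\ell\|^2=O(e_\ell)$ and hence $|\langle P^-\phi_\ell\,|\,\psi_\ell^-\rangle|^2=1-O(e_\ell)$. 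This immediately yields both the alignment you need and the bound $\langle J_\ell\phi_\ell\,|\,\phi_\ell\rangle=-1+O(e_\ell)$ controlling $\|P_\ell\|$. Replacing your Schur-complement paragraph with this second application of Lemma~\ref{lemma:BS-estimate} completes your argument.
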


Another consequence of Lemma~\ref{lemma:BS-estimate} is the following
set of relations, which the proof of Theorem~\ref{thm:convergence}
heavily relies on.

\begin{corollary}
  Let $(V_\ell)_{\ell>0}$ be a family of real-valued functions which
  satisfy \ref{ax:Lp}-\ref{ax:V-x} in Assumptions \ref{asm:assumption}.
  Then
  \label{corollary:V-phi}
  \begin{enumerate}[label=(\roman*)]
  \item $\int_{\mathbb{R}^3} |x|\,|V_\ell(x)| \ddd{3}x =
    O(e_\ell)$, \label{corollary:V-phi:4}
  \item $\langle J_\ell \phi_\ell|\phi_\ell\rangle = -1 +
    O(e_\ell)$, \label{corollary:V-phi:1}
  \item $\langle |V_\ell|^{1/2}||\phi_\ell|\rangle =
    O(|e_\ell|^{1/2})$, \label{corollary:V-phi:2}
  \item $\int_{\mathbb{R}^3} |x|\,|V_\ell(x)|^{1/2}\,|\phi_\ell(x)|
    \ddd{3}x = O(|e_\ell|^{3/2})$.
    \label{corollary:V-phi:3}
  \end{enumerate}
\end{corollary}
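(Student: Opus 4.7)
For (i), I would factor $|V_\ell|=|V_\ell|^{1/2}\cdot|V_\ell|^{1/2}$ and apply Cauchy--Schwarz with weights $|x|$ and $1$, yielding
$\int|x|\,|V_\ell|\leq\bigl(\int|x|^2|V_\ell|\bigr)^{1/2}\|V_\ell\|_{L^1}^{1/2}=O(e_\ell)$ via \ref{ax:V-x} and \ref{ax:V}.

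For (ii), the heart of the corollary, I would normalize $\|\phi_\ell\|_{L^2}=1$ and first observe that $\phi_\ell$ is supported on $\supp(V_\ell)$: the relation $B_\ell\phi_\ell=(e_\ell-1)\phi_\ell$ forces this, since the left-hand side vanishes outside $\supp(V_\ell)$ and $e_\ell-1\neq 0$. Writing $\phi_\ell=\phi_\ell^++\phi_\ell^-$ with $\supp(\phi_\ell^\pm)\subseteq\supp(V_\ell^\pm)$, the statement is equivalent to $\|\phi_\ell^+\|_{L^2}^2=O(e_\ell)$, since $\langle J_\ell\phi_\ell|\phi_\ell\rangle=\|\phi_\ell^+\|^2-\|\phi_\ell^-\|^2=2\|\phi_\ell^+\|^2-1$. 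To produce this bound, I would rewrite $(1+B_\ell)\phi_\ell=e_\ell\phi_\ell$ as $(J_\ell+X_\ell)\phi_\ell=e_\ell J_\ell\phi_\ell$, so that $\|(J_\ell+X_\ell)\phi_\ell\|_{L^2}=|e_\ell|$, and insert this into Lemma~\ref{lemma:BS-estimate}. Discarding the non-negative term $\langle\phi_\ell|X_\ell^+\phi_\ell\rangle$ and using $(V_\ell^-)^{1/2}\phi_\ell=(V_\ell^-)^{1/2}\phi_\ell^-$ leaves
\[
\langle\phi_\ell^-|X_\ell^-\phi_\ell^-\rangle\geq 1-\sqrt{2}\,|e_\ell|.
\]
By the equivalent formulation (A2') of \ref{ax:e}, the top eigenvalue of $X_\ell^-$ equals $1-e_\ell^-=1+O(e_\ell)$, which forces $\|\phi_\ell^-\|^2\geq 1-O(e_\ell)$ and hence $\|\phi_\ell^+\|^2=O(e_\ell)$, proving (ii).

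Parts (iii) and (iv) then follow by splitting the integrals along $\supp(V_\ell^\pm)$ and applying Cauchy--Schwarz on each half, using $\|\phi_\ell^+\|_{L^2}=O(|e_\ell|^{1/2})$ from (ii). In (iii) the two halves give $\|V_\ell^+\|_{L^1}^{1/2}O(|e_\ell|^{1/2})+\|V_\ell^-\|_{L^1}^{1/2}O(1)=O(|e_\ell|^{1/2})$ via \ref{ax:V}, and in (iv) they give $\bigl(\int|x|^2V_\ell^+\bigr)^{1/2}O(|e_\ell|^{1/2})+\bigl(\int|x|^2V_\ell^-\bigr)^{1/2}O(1)=O(|e_\ell|^{3/2})$ via \ref{ax:V-x}. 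The only delicate step is (ii): because $B_\ell$ is non-self-adjoint there is no spectral shortcut to the concentration of $\phi_\ell$ on $\supp(V_\ell^-)$, and Lemma~\ref{lemma:BS-estimate} is exactly the device that converts the smallness of $e_\ell$ into the required $L^2$-concentration.
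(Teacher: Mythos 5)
Your proof is correct and follows essentially the same route as the paper: the key step in (ii) is applying Lemma~\ref{lemma:BS-estimate} to $\phi_\ell$ with $\|(J_\ell+X_\ell)\phi_\ell\|=|e_\ell|$, and using that the top eigenvalue of $X_\ell^-$ is $1-e_\ell^-=1+O(e_\ell)$ to force $\|\phi_\ell^+\|^2=O(e_\ell)$, exactly as in the paper. The minor variations — Cauchy--Schwarz rather than the bound $|x|\leq\frac12(|e_\ell|+|x|^2/|e_\ell|)$ in (i), and bounding $\langle\phi_\ell^-|X_\ell^-\phi_\ell^-\rangle$ from above rather than $\langle\phi_\ell^-|(1-X_\ell^-)\phi_\ell^-\rangle$ from below in (ii) — are cosmetic and yield the same estimates.
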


The proof of these facts will be given Section~\ref{sec:proof-cor}.

\section{Proof  of Theorem~\ref{thm:convergence}}
\label{sec:proofs}
Let $k\in\mathbb{C}$ with $\Im\, k > 0$. 
Following the strategy in \cite{albeverio} our starting point is the identity 
  \begin{equation}
    \label{eq:resolvent}
    \frac{1}{p^2+V_\ell - k^2}
    = \frac{1}{p^2-k^2}
    - \underbracket[0.5pt][2pt]{\frac{1}{p^2-k^2}
      |V_\ell|^{1/2}}_{\textrm{\textcircled{\scriptsize 1}}} \underbracket[0.5pt][2pt]{\frac{1}{1+V_\ell^{1/2}
        \frac{1}{p^2-k^2}|V_\ell|^{1/2}}}_{\textrm{\textcircled{\scriptsize 2}}} \underbracket[0.5pt][2pt]{V_\ell^{1/2} \frac{1}{p^2-k^2}}_{\textrm{\textcircled{\scriptsize 3}}}.
  \end{equation}
  Note that the operators \textcircled{\scriptsize 1} and
  \textcircled{\scriptsize 3} are uniformly bounded in $\ell$. In fact,
  \begin{equation*}
    \Bigl\|\frac{1}{p^2-k^2} |V_\ell|^{1/2}\Bigr\| \leq
    \Bigl\|\frac{1}{p^2-k^2} |V_\ell|^{1/2}\Bigr\|_2 =
    \bigl\|\frac{1}{p^2-k^2}\bigr\|_{L^2} \| V_\ell^{1/2}\|_{L^2},
  \end{equation*}
  which is uniformly bounded due to our assumptions on $V_\ell$.  We will first show in Lemma~\ref{newlem} that $\frac{1}{p^2-k^2} |V_\ell|^{1/2}$ is, up to small errors of $O(|e_\ell|^{1/2})$,   equal to the rank one operator $ |g_k\rangle
  \langle |V_\ell|^{1/2} | $. Together with the formula (\ref{eq:a}) for the scattering length this will lead us finally
  to \eqref{eq:resolvent:2}. 
  
  \begin{lemma}\label{newlem}
     \begin{subequations}
      \begin{align}
        \label{eq:convergence:3:a}
        \Bigl\| \Bigl(\tfrac{1}{p^2-k^2} |V_\ell|^{1/2} - |g_k\rangle
        \bigl< |V_\ell|^{1/2}\bigr|\Bigr)
        \tfrac{1}{1+B_\ell}|V_\ell^{1/2}\rangle \Bigr\|_{L^2} &= O(|e_\ell|^{1/2})\\
        \label{eq:convergence:3:b}
        \Bigl\| \Bigl(\tfrac{1}{p^2-k^2} |V_\ell|^{1/2} - |g_k\rangle
        \bigl< |V_\ell|^{1/2}\bigr|\Bigr) \tfrac{1}{1+B_\ell}V_\ell^{1/2} \tfrac{1}{p^2-k^2}\Bigr\| &=
        O(|e_\ell|^{1/2}).
      \end{align}
    \end{subequations}
  \end{lemma}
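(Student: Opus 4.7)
My plan is to pass to Fourier space and exploit the kernel $(g_k(x-y)-g_k(x))|V_\ell(y)|^{1/2}$ of $A_\ell := (p^2-k^2)^{-1}|V_\ell|^{1/2} - |g_k\rangle\langle |V_\ell|^{1/2}|$. For any $u \in L^2$, setting $h:=|V_\ell|^{1/2}u$, the computation $(A_\ell u)(x) = (g_k * h)(x) - g_k(x)\hat h(0)$ together with $\hat g_k(p)=(p^2-k^2)^{-1}$ yields
\begin{equation*}
\widehat{A_\ell u}(p) = \frac{\hat h(p) - \hat h(0)}{p^2-k^2}.
\end{equation*}
Since $|e^{-ipy}-1|\le\min(|p||y|,2)$, one obtains $|\hat h(p)-\hat h(0)|\le\min(|p|M,2N)$ with $M:=\int|y||h|\,dy$ and $N:=\|h\|_{L^1}$. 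Splitting the resulting Plancherel integral at $|p|=N/M$ and performing the routine three-dimensional spherical integration (using $\Im k>0$ to keep $(p^2-k^2)^{-1}$ bounded) delivers the master estimate $\|A_\ell u\|_{L^2}^2 \le C(k)\,M\,N$. Both (a) and (b) then reduce to verifying $M=O(|e_\ell|)$ and $N=O(1)$ for the appropriate $u$, with an extra $\|f\|_{L^2}$ factor in case (b).

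For (a), I would take $u=(1+B_\ell)^{-1}V_\ell^{1/2}$. Because $P_\ell$ commutes with $B_\ell$, the decomposition $(1+B_\ell)^{-1} = e_\ell^{-1}P_\ell + (1+B_\ell)^{-1}(1-P_\ell)$ gives $u = c_\ell\phi_\ell + \chi_\ell$ with $|c_\ell|=O(|e_\ell|^{-1/2})$ (from Corollary~\ref{corollary:V-phi}(ii), (iii)) and $\|\chi_\ell\|_{L^2}=O(1)$ (from Corollary~\ref{corollary:projection-bound}). Then
\begin{equation*}
M \le |c_\ell|\int|y|\,|V_\ell|^{1/2}|\phi_\ell|\,dy + \bigl\| |y|\,|V_\ell|^{1/2} \bigr\|_{L^2}\|\chi_\ell\|_{L^2} = O(|e_\ell|^{-1/2})\cdot O(|e_\ell|^{3/2}) + O(|e_\ell|) = O(|e_\ell|),
\end{equation*}
using Corollary~\ref{corollary:V-phi}(iv) and assumption~\ref{ax:V-x}; analogously $N=O(1)$ via Corollary~\ref{corollary:V-phi}(iii).

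For (b), I would apply the same strategy to $u_f := (1+B_\ell)^{-1}V_\ell^{1/2}(p^2-k^2)^{-1}f$ for arbitrary $f\in L^2$. The bounded piece is treated as before. The singular coefficient $s_f = \langle J_\ell\phi_\ell, V_\ell^{1/2}(p^2-k^2)^{-1}f\rangle/\langle J_\ell\phi_\ell,\phi_\ell\rangle$ needs extra care: its numerator rewrites as $\langle \eta_\ell, (p^2-k^2)^{-1}f\rangle$ with $\eta_\ell := |V_\ell|^{1/2}\phi_\ell$, and using $|\hat\eta_\ell(p)|\le\|\eta_\ell\|_{L^1} = O(|e_\ell|^{1/2})$ (Corollary~\ref{corollary:V-phi}(iii)) combined with $\int|p^2-k^2|^{-2}dp<\infty$, Plancherel and Cauchy-Schwarz give $|s_f|=O(|e_\ell|^{1/2})\|f\|_{L^2}$. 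The same bookkeeping as in (a) then produces $M_f = O(|e_\ell|\,\|f\|_{L^2})$ and $N_f = O(\|f\|_{L^2})$, and the master estimate gives the advertised operator-norm bound.

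The hard part is that the $O(|e_\ell|)$ size of $M$ must survive the $O(|e_\ell|^{-1/2})$ blow-up of $c_\ell$. This succeeds only because Corollary~\ref{corollary:V-phi}(iv) — itself a nontrivial consequence of Lemma~\ref{lemma:BS-estimate} combined with assumption~\ref{ax:V-x} — furnishes the matching $O(|e_\ell|^{3/2})$ decay of $\int|y|\,|V_\ell|^{1/2}|\phi_\ell|\,dy$. It is precisely this cancellation that lets the argument beat the naive $\|(1+B_\ell)^{-1}\| = O(|e_\ell|^{-2})$ scaling mentioned in the remark preceding the lemma.
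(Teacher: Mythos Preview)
Your argument is correct and shares the paper's high-level structure: the same spectral splitting $(1+B_\ell)^{-1}=e_\ell^{-1}P_\ell+(1+B_\ell)^{-1}(1-P_\ell)$, with Consequence~\ref{corollary:projection-bound} controlling the second piece and Consequence~\ref{corollary:V-phi} feeding the singular $P_\ell$ piece. The genuine difference is in how $A_\ell u$ is estimated. The paper works entirely in position space: it evaluates $F_k(y)=\int\frac{\ee^{ik|x|}}{|x|}\frac{\ee^{-i\bar k|x-y|}}{|x-y|}\dd x$ explicitly, deduces $\int|g_k(x-y)-g_k(x)|^2\dd x\le C(k)|y|$, and hence the Hilbert--Schmidt bound $\|A_\ell\|_2=O(|e_\ell|^{1/2})$; the $P_\ell$ contribution is then treated separately via the kernel $\Omega_k(z,w)=F_k(w-z)+F_k(0)-F_k(w)-F_k(z)$. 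Your Fourier-side master inequality $\|A_\ell u\|_{L^2}^2\le C(k)\,M\,N$ with $M=\int|y||h|$, $N=\|h\|_{L^1}$ replaces both steps by a single estimate and avoids computing $F_k$, at the cost of a less explicit constant. Since both routes consume exactly the same inputs (Assumptions~\ref{ax:V}--\ref{ax:V-x} and Consequences~\ref{corollary:projection-bound}--\ref{corollary:V-phi}), neither is stronger; your version is a bit more streamlined, while the paper's gives sharper constants and makes the linear-in-$|y|$ behaviour of $\int|g_k(x-y)-g_k(x)|^2\dd x$ transparent.
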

  
  \begin{proof}
We make use of  the
    decomposition
    \begin{equation}
      \label{eq:BS-decomposition}
      \frac{1}{1+B_\ell} = \frac{1}{e_\ell} P_\ell + \frac{1}{1+B_\ell}(1-P_\ell),
    \end{equation}
    where we first treat the contribution of the second summand to
    \eqref{eq:convergence:3:a}  and \eqref{eq:convergence:3:b}. This is the easier one thanks to
    Consequence~\ref{corollary:projection-bound}, which tells us that $ \frac{1}{1+B_\ell}(1-P_\ell)$ is uniformly bounded in $\ell$. 
Note that the integral kernel of  the operator $\tfrac{1}{p^2-k^2} |V_\ell|^{1/2} - |g_k\rangle
        \bigl< |V_\ell|^{1/2}|$
    in \eqref{eq:convergence:3:a} and \eqref{eq:convergence:3:b} is
    given by the expression
    \begin{equation*}
      \big(g_k(x-y) - g_k(x)\big) |V_\ell|^{1/2}(y)\,.
    \end{equation*} 
  An explicit computation shows that 
    \begin{equation*}
      F_k(y) :=  \int_{\mathbb{R}^3} \frac{\ee^{ik  |x|}}{|x|}
      \frac{\ee^{-i \bar k |x-y|}}{|x-y|} \ddd{3}x
      = \frac{2\pi}{\Im(k)}\ee^{-\Im(k)|y|}
        \frac{\sin(\Re(k)|y|)}{\Re(k)|y|}\,.
    \end{equation*}
From this one easily obtains the bound 
      \begin{equation}
      \label{eq:G-difference}
      \int_{\mathbb{R}^3} \big| g_k(x-y) - g_k(x) \big|^2 \ddd{3}x = \frac 2{(4\pi)^{2}} \left( F_{k}(0) - F_{k}(y) \right)  \leq \frac 1
      {4\pi} \left(1+ \frac{|\Re (k)|}{2|\Im(k)|}\right) |y|\,.
    \end{equation}
Hence we infer from Consequence~\ref{corollary:V-phi}\ref{corollary:V-phi:4} that the Hilbert-Schmidt norm of this operator is bounded as    
    \begin{equation}
      \label{eq:convergence:GV}
      \Bigl\|\tfrac{1}{p^2-k^2} |V_\ell|^{1/2} - |g_k\rangle \bigl<
      |V_\ell|^{1/2}\bigr| \Bigr\|_2
      \leq \left[
        \tfrac 1{4\pi} \Bigl(1+ \tfrac{|\Re (k)|}{2|\Im(k)|}\Bigr)
        \int_{\mathbb{R}^3} |V_\ell(x)||x| \ddd{3}x
      \right]^{1/2}
      = O(|e_\ell|^{1/2}).
    \end{equation}
 Thus the contribution of the last term in (\ref{eq:BS-decomposition}) to (\ref{eq:convergence:3:a}) gives a term of order $|e_\ell|^{1/2}$. 
   With $V_\ell^{1/2}\tfrac{1}{p^2-k^2}$ being a uniformly bounded operator we also infer that the same holds true for \eqref{eq:convergence:3:b}.
 
    It remains to estimate the contribution coming from the first term
    on the right  side of \eqref{eq:BS-decomposition}. With
    $P_\ell$ defined in \eqref{eq:P}, the norm of
    the corresponding vector in \eqref{eq:convergence:3:a} is
    \begin{equation}
      \label{eq:convergence:3:estimate}
      \begin{split}
        &\frac{1}{|e_\ell|}\Bigl\| \Bigl(\tfrac{1}{p^2-k^2} |V_\ell|^{1/2} - |g_k\rangle
        \bigl< |V_\ell|^{1/2}\bigr|\Bigr)
        P_\ell|V_\ell^{1/2}\rangle \Bigr\|_{L^2}\\
        &\quad= \left| \tfrac{\langle |V_\ell|^{1/2} |\phi_\ell \rangle} {e_\ell
          \langle J_\ell \phi_\ell | \phi_\ell\rangle} \right|
        \left(\int_{\mathbb{R}^3} \left|
            \int_{\mathbb{R}^3}|V_\ell|^{1/2}(z) \phi_\ell(z)
            \big(g_k(z-x) - g_k(x)\big) \ddd{3}z \right|^2
          \ddd{3}x\right)^{1/2} \\
        &\quad = \left| \tfrac{\langle |V_\ell|^{1/2} |\phi_\ell \rangle}
        {4 \pi e_\ell \langle J_\ell \phi_\ell| \phi_\ell\rangle} \right| \left(
          \int_{\mathbb{R}^6} \Omega_k(z,w)
          |V_\ell|^{1/2}(z)\phi_\ell(z) |V_\ell|^{1/2}(w)
          \phi_\ell(w) \ddd{3}z \ddd{3}w \right)^{1/2},
      \end{split}
    \end{equation}
    where
    \begin{align*}
      \Omega_k(z,w) & = (4\pi)^2 \int_{\mathbb{R}^3} [ g_{\bar k}(x-z) - g_{\bar k}(x) ]
      [g_k(x-w) -g_k(x) ] \ddd{3}x \\ &= F_k(w-z) + F_k(0) - F_k(w) - F_k(z) \,.
    \end{align*}
    The bound~\eqref{eq:G-difference} implies that 
    \begin{equation*}
      |\Omega_k(z,w)| \leq \frac 1{2\pi} \left(1+ \frac{|\Re (k)|}{2|\Im(k)|}\right) (|z|+|w|).
    \end{equation*}
    Together with Consequence~\ref{corollary:V-phi}\ref{corollary:V-phi:2} and
    \ref{corollary:V-phi:3}  we are thus able to estimate \eqref{eq:convergence:3:estimate} by $O(|e_\ell|^{1/2})$.
   This implies \eqref{eq:convergence:3:a}.
In order to get \eqref{eq:convergence:3:b} we make use of \eqref{eq:convergence:GV} and Consequence~\ref{corollary:V-phi}\ref{corollary:V-phi:2} and evaluate 
    \begin{align*}
      \| \tfrac{1}{p^2-k^2} |V_\ell|^{1/2} |\phi_\ell\rangle\|_{L^2}
      &\leq \| |g_k\rangle \langle |V_\ell|^{1/2}
      |\phi_\ell\rangle\|_{L^2} + \bigl\| \Bigl(\tfrac{1}{p^2-k^2}
      |V_\ell|^{1/2} -|g_k\rangle
      \langle |V_\ell|^{1/2}|\Bigr) |\phi_\ell\rangle \bigr\|_{L^2}\\
      &= |\langle |V_\ell| \big| \phi_\ell \rangle| + O(|e_\ell|^{1/2})
      = O(|e_\ell|^{1/2}).
    \end{align*}
    This completes the proof.
   \end{proof}

  Since the norm of
  \textcircled{\scriptsize 2} diverges in the limit of small $\ell$ we
  need to keep track of precise error bounds. In order to do that, we start by  rewriting the term  \textcircled{\scriptsize 2}  in a particularly useful way, which is presented in the following lemma. 

 \begin{lemma}
    \label{lemma:convergence:1}
    \begin{equation}
      \label{eq:convergence:1}
      \frac{1}{1+V_\ell^{1/2} \frac{1}{p^2-k^2}|V_\ell|^{1/2}} 
      = \Bigl(1-\tfrac{1}{\tfrac{4\pi}{ik}+4\pi
        a(V_\ell)} \frac{1}{1+ B_\ell} |V_\ell^{1/2}\rangle\langle |V_\ell|^{1/2}|\Bigr)
      \frac{1}{1+Q_\ell}
      \frac{1}{1+B_\ell},
    \end{equation}
    where
    \begin{align}\label{def:Q}
      Q_\ell = \frac{1}{1+B_\ell} R_\ell
      \left(1-\frac{1}{\tfrac{4\pi}{ik}+4\pi a(V_\ell)}
      \frac{1}{1+B_\ell} |V_\ell^{1/2}\rangle\langle
      |V_\ell|^{1/2}| \right)
    \end{align}
    and where $R_\ell$ is the operator with integral kernel
    \begin{equation}\label{def:R}
      R_\ell(x,y) = - \frac{ik}{4\pi}V_\ell^{1/2}(x)r(ik|x-y|)|V_\ell|^{1/2}(y)
    \end{equation}
    with $r(z) = \frac{\ee^z - 1 - z}{z}$.
  
    Moreover, $Q_\ell$ satisfies
    \begin{subequations}
      \begin{align}
        \label{eq:convergence:Q:1}
        \|Q_\ell\| &= O(|e_\ell|^{1/2}) \\
        \label{eq:convergence:Q:2}
        \left\|Q_\ell \frac{1}{1+B_\ell} V_\ell^{1/2}
        \frac{1}{p^2-k^2}\right\| &= O(|e_\ell|^{1/2}).
      \end{align}
    \end{subequations}
  \end{lemma}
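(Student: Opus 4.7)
The plan is to derive the identity \eqref{eq:convergence:1} by a direct algebraic manipulation based on an explicit kernel computation, and then to obtain \eqref{eq:convergence:Q:1}--\eqref{eq:convergence:Q:2} by combining a Hilbert--Schmidt estimate on $R_\ell$ with the spectral splitting $(1+B_\ell)^{-1} = e_\ell^{-1}P_\ell + (1+B_\ell)^{-1}(1-P_\ell)$ from Consequence~\ref{corollary:projection-bound} and the moment bounds of Consequence~\ref{corollary:V-phi}.

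\emph{Identity.} The kernel of $(p^2-k^2)^{-1}$ is $\ee^{ik|x-y|}/(4\pi|x-y|)$ for $\Im k>0$. Writing $\ee^{ik|x-y|} = 1 + ik|x-y|\bigl(1 + r(ik|x-y|)\bigr)$ yields
\[
V_\ell^{1/2}\tfrac{1}{p^2-k^2}|V_\ell|^{1/2} = B_\ell + \tfrac{ik}{4\pi}\Pi - R_\ell,\qquad \Pi := |V_\ell^{1/2}\rangle\langle|V_\ell|^{1/2}|,
\]
so that $M_\ell := 1 + V_\ell^{1/2}\tfrac{1}{p^2-k^2}|V_\ell|^{1/2} = (1+B_\ell) + \tfrac{ik}{4\pi}\Pi - R_\ell$. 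Multiplying $M_\ell$ on the right by $N := 1 - c(1+B_\ell)^{-1}\Pi$ with $c = (\tfrac{4\pi}{ik}+4\pi a(V_\ell))^{-1}$, and using $\Pi(1+B_\ell)^{-1}\Pi = 4\pi a(V_\ell)\,\Pi$ (a direct consequence of \eqref{eq:a}) together with the algebraic cancellation $\tfrac{ik}{4\pi} = c\bigl(1 + ik\,a(V_\ell)\bigr)$, the two rank-one contributions collapse and one obtains $M_\ell N = (1+B_\ell)(1+Q_\ell)$. Inverting, and invoking Sherman--Morrison to check that $N^{-1} = 1 + \tfrac{ik}{4\pi}(1+B_\ell)^{-1}\Pi$ coincides with the pre-factor appearing in \eqref{eq:convergence:1}, completes the derivation.

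\emph{Bound on $Q_\ell$.} The Hilbert--Schmidt estimate $\|R_\ell\|_2 = O(|e_\ell|)$ follows from $|r(ik|x-y|)| \lesssim |x-y|$ on bounded sets (the tail region being controlled by Chebyshev applied to $\int|V_\ell||x|^2$ together with the exponential decay of $\ee^{ik|x-y|}$), combined with $\|V_\ell\|_{L^1} = O(1)$ and $\int|V_\ell(x)||x|^2\,\mathrm{d}^3 x = O(e_\ell^2)$ from \ref{ax:V}--\ref{ax:V-x}. Splitting $Q_\ell = (1+B_\ell)^{-1}R_\ell N$ via Consequence~\ref{corollary:projection-bound}, the regular contribution $(1+B_\ell)^{-1}(1-P_\ell)R_\ell N$ is $O(\|R_\ell N\|) = O(|e_\ell|)$. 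For the singular contribution $e_\ell^{-1}P_\ell R_\ell N$ one needs the sharper matrix element bound
\[
|\langle J_\ell\phi_\ell\,|\,R_\ell\phi_\ell\rangle| \;\lesssim\; \Bigl(\int|V_\ell|^{1/2}|\phi_\ell|\Bigr)\Bigl(\int|x||V_\ell|^{1/2}|\phi_\ell|\Bigr) = O(e_\ell^{2}),
\]
which comes from Consequence~\ref{corollary:V-phi}\ref{corollary:V-phi:2}--\ref{corollary:V-phi:3}. Expanding $(1+B_\ell)^{-1}|V_\ell^{1/2}\rangle = \alpha_\ell\phi_\ell + O(1)$ with $\alpha_\ell = O(|e_\ell|^{-1/2})$ shows that the cross-term $P_\ell R_\ell(1+B_\ell)^{-1}\Pi$ also has its leading divergent part proportional to $\langle J_\ell\phi_\ell|R_\ell\phi_\ell\rangle$, so $\|P_\ell R_\ell N\| = O(|e_\ell|^{3/2})$, yielding \eqref{eq:convergence:Q:1}.

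\emph{Composition bound and main obstacle.} For \eqref{eq:convergence:Q:2}, I would apply the same spectral decomposition to the additional $(1+B_\ell)^{-1}$ on the right of $Q_\ell$. On the regular subspace, applying $Q_\ell$ to $(1+B_\ell)^{-1}(1-P_\ell)V_\ell^{1/2}\tfrac{1}{p^2-k^2}$ yields $O(\|Q_\ell\|) = O(|e_\ell|^{1/2})$. The singular subspace produces a multiple of $\phi_\ell$ with coefficient $\propto e_\ell^{-1}\langle J_\ell\phi_\ell|V_\ell^{1/2}\tfrac{1}{p^2-k^2}\psi\rangle$, and here the key twofold cancellation is: (i) $\|Q_\ell\phi_\ell\| = O(|e_\ell|)$ because $N$ is designed so that $N\phi_\ell$ is a scalar multiple of $\phi_\ell$ modulo $O(|e_\ell|^{1/2})$ while $R_\ell\phi_\ell$ is already small; and (ii) a Lemma~\ref{newlem}-type approximation shows $|\langle|V_\ell|^{1/2}\phi_\ell|\tfrac{1}{p^2-k^2}\psi\rangle| = O(|e_\ell|^{1/2}\|\psi\|)$, via replacing $\tfrac{1}{p^2-k^2}\psi$ by $(\tfrac{1}{p^2-k^2}\psi)(0)$ on the support of $|V_\ell|^{1/2}\phi_\ell$ and estimating the error by the bound \eqref{eq:G-difference} together with $\int|x||V_\ell|^{1/2}|\phi_\ell| = O(|e_\ell|^{3/2})$. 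Together these give the claimed $O(|e_\ell|^{1/2})$. The main technical obstacle throughout is exactly this layered bookkeeping: each of the factors entering $Q_\ell$ and the composition individually diverges like $e_\ell^{-1}$ or $e_\ell^{-1/2}$, and the stated smallness only emerges by systematically pairing every such divergence with a compensating moment bound from Consequences~\ref{corollary:projection-bound} and~\ref{corollary:V-phi}.
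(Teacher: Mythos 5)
Your proposal is correct and follows essentially the same route as the paper: the same expansion of the resolvent kernel via $r(z)=\tfrac{\ee^z-1-z}{z}$, the same Sherman--Morrison inversion of the rank-one factor using $\langle |V_\ell|^{1/2}|(1+B_\ell)^{-1}|V_\ell^{1/2}\rangle = 4\pi a(V_\ell)$, the same Hilbert--Schmidt bound $\|R_\ell\|_2 = O(|e_\ell|)$ from \ref{ax:V}--\ref{ax:V-x}, and the same spectral splitting of $(1+B_\ell)^{-1}$ into $e_\ell^{-1}P_\ell + (1+B_\ell)^{-1}(1-P_\ell)$ combined with the moment bounds of Consequence~\ref{corollary:V-phi} (notably $|\langle J_\ell\phi_\ell|R_\ell\phi_\ell\rangle| = O(e_\ell^2)$) to control the singular parts. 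The only differences are cosmetic: you organize the bookkeeping for \eqref{eq:convergence:Q:2} by first splitting the rightmost $(1+B_\ell)^{-1}$ and then bounding $\|Q_\ell\phi_\ell\|$, whereas the paper peels off the rank-one pieces of $N$ first and then reduces to the three quantities \eqref{eq:Q:a}--\eqref{eq:Q:c}; and your correct kernel expansion $\ee^{ik|x|}=1+ik|x|(1+r(ik|x|))$ actually yields $M_\ell = (1+B_\ell) + \tfrac{ik}{4\pi}\Pi - R_\ell$ and hence $M_\ell N = (1+B_\ell)(1-Q_\ell)$, exposing a harmless sign typo in the paper's displayed expansion of $g_k$ that has no effect on the norm estimates.
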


  \begin{proof}
    The integral kernel of the operator $(p^2-k^2)^{-1}$ is given by $g_k(x-y)$. We expand this function as 
      \begin{equation*}
      g_k(x) = \frac{1}{4\pi} \frac{\ee^{ik|x|}}{|x|}
      =
      \frac{1}{4\pi}\frac{1}{|x|}
      +\frac{ik}{4\pi} - \frac{ik}{4\pi}r(ik|x|)
    \end{equation*}
    where $r(z) = \frac{\ee^z - 1 - z}{z}$. We insert this expansion in the expression for
    \textcircled{\scriptsize 2} and thus obtain the identity
  $$
      1+V_\ell^{1/2} \frac{1}{p^2-k^2}|V_\ell|^{1/2} =
     1+B_\ell + \frac{ik}{4\pi}
        |V_\ell^{1/2}\rangle\langle |V_\ell|^{1/2}| + R_\ell
        $$
         with  $R_\ell$ defined in (\ref{def:R}). We further rewrite this expression as 
        \begin{align*}
      &(1+B_\ell) \left(1+\frac{1}{1+B_\ell} \left(\frac{ik}{4\pi}
        |V_\ell^{1/2}\rangle\langle |V_\ell|^{1/2}| + R_\ell\right) 
        \right)
     \\      &= (1+B_\ell) 
           \left(
        1+\frac{1}{1+B_\ell} R_\ell
        \tfrac{1}{1+\frac{ik}{4\pi} \frac{1}{1+B_\ell}
          |V_\ell^{1/2}\rangle\langle |V_\ell|^{1/2}|}\right)
          \left(1+\frac{ik}{4\pi} \frac{1}{1+B_\ell}
        |V_\ell^{1/2}\rangle\langle |V_\ell|^{1/2}|\right)
\,.    \end{align*}
    The inverse of the operator in the last parenthesis can be  calculated explicitly, and is given by
    \begin{equation*}
    \left(1+\frac{ik}{4\pi} \frac{1}{1+B_\ell} |V_\ell^{1/2}\rangle\langle |V_\ell|^{1/2}|\right)^{-1}
      =
      1-\tfrac{1}{\tfrac{4\pi}{ik}+4\pi a(V_\ell)} \frac{1}{1+B_\ell}
      |V_\ell^{1/2}\rangle\langle |V_\ell|^{1/2}|,
    \end{equation*}
    at least whenever $a(V_\ell) \neq i/k$, which we can assume for small enough $\ell$. 
    Hence (\ref{eq:convergence:1}) holds, with $Q_\ell$ defined in (\ref{def:Q}). 
    
    We will now prove \eqref{eq:convergence:Q:1} and
    \eqref{eq:convergence:Q:2}.
    We have
    \begin{align*}
      \|Q_\ell \| &\leq \left\|\tfrac{1}{1+B_\ell} R_\ell\right\|
      +
      \tfrac{1}{\left|\tfrac{4\pi}{ik}+4\pi a(V_\ell)\right|}
      \left\|\tfrac{1}{1+B_\ell} R_\ell \tfrac{1}{1+B_\ell}
      |V_\ell^{1/2}\rangle\right\|_{L^2} \|V_\ell\|_{L^1}^{1/2} \\
      \left\|Q_\ell \tfrac{1}{1+B_\ell}
        V_\ell^{1/2}\tfrac{1}{p^2-k^2}
      \right\|
      &\leq \left\|\tfrac{1}{1+B_\ell} R_\ell \tfrac{1}{1+B_\ell}
      V_\ell^{1/2} \tfrac{1}{p^2-k^2}\right\|\\
      &\quad + \tfrac{1}{\left|\tfrac{4\pi}{ik}+4\pi a(V_\ell)\right|}
      \left\|
        \tfrac{1}{1+B_\ell} R_\ell \tfrac{1}{1+B_\ell}
        |V_\ell^{1/2}\rangle \right\|_{L^2}
      \left\| 
        \tfrac{1}{p^2-\bar{k}^2} |V_\ell|^{1/2} \tfrac{1}{1+B_\ell}
        |V_\ell^{1/2} \rangle \right\|_{L^2}.
    \end{align*}
  The norm $\|V_\ell\|_{L^1}$ is uniformly bounded by assumption, and it follows from Lemma~\ref{newlem} that also $\left\| 
      \tfrac{1}{p^2-\bar{k}^2} |V_\ell|^{1/2} \tfrac{1}{1+B_\ell}
      |V_\ell^{1/2} \rangle \right\|_{L^2}$ is uniformly bounded. Hence it
    suffices to bound the following expressions: 
    \begin{subequations}
      \begin{align}
        &\|\tfrac{1}{1+B_\ell} R_\ell\| \label{eq:Q:a} \\
        &\left\|\tfrac{1}{1+B_\ell} R_\ell \tfrac{1}{1+B_\ell}
          |V_\ell^{1/2}\rangle \right\|_{L^2} \label{eq:Q:b}\\
        &\left\| \tfrac{1}{1+B_\ell} R_\ell \tfrac{1}{1+B_\ell} V_\ell^{1/2} \tfrac{1}{p^2-k^2}
        \right\|. \label{eq:Q:c}
      \end{align}
    \end{subequations}
    To this aim we again use the decomposition \eqref{eq:BS-decomposition}.    For \eqref{eq:Q:a} note that
    \begin{equation}
      \label{eq:RB-bound:1}
      \begin{split}
        \| R_\ell \|_2^2 &= \frac{|k|^2}{(4\pi)^2}\int_{\mathbb{R}^6}
        |V_\ell(x)| |r(ik|x-y|)|^2 |V_\ell(y)|
        \ddd{3}x \ddd{3}y\\
        &\leq \frac {c^2 |k|^2}{(4\pi)^2} \int_{\mathbb{R}^6} |V_\ell(x)| |x-y|^2 |V_\ell(y)|
        \ddd{3}x \ddd{3}y\\
        &\leq  \frac{4 c^2 |k|^2}{(4\pi)^2} \|V_\ell\|_{L^1} \int_{\mathbb{R}^3}
        |V_\ell(x)|\, |x|^2 \ddd{3}x = O(e_\ell^2),
      \end{split}
    \end{equation}
       using  $|r(z)| \leq c |z|$ for $\Re z<0$ and some $c>0$, as well as Assumptions~\ref{ax:V} and  \ref{ax:V-x}. Since the last term in (\ref{eq:BS-decomposition}) is uniformly bounded by Consequence \ref{corollary:projection-bound},
    $\|\tfrac{1}{1+B_\ell}(1-P_\ell)R_\ell\|_2 =
    O(e_\ell)$ and $\|R_\ell \tfrac{1}{1+B_\ell}(1-P_\ell)\|_2 =
    O(e_\ell)$.  On the other hand
    \begin{equation*}
     \| P_\ell R_\ell\|
      = \frac{\|R^*_\ell J_\ell
        \phi_\ell\|_{L^2}}{|\langle J_\ell \phi_\ell| \phi_\ell\rangle|},
    \end{equation*}
    where
    \begin{equation*}
      (R^*_\ell J_\ell \phi_\ell)(x)
      =
      \frac{i \bar k}{4\pi}\int_{\mathbb{R}^3} |V_\ell(x)|^{1/2}r(-i \bar k|x-y|)
      |V_\ell(y)|^{1/2} \phi_\ell(y) \ddd{3}y.
    \end{equation*}
    Using again the above pointwise bound on $r$, it follows from Consequence~\ref{corollary:V-phi} that
    \begin{equation}
      \label{eq:RB-bound:2}
    \| P_\ell R_\ell\|
      = O(|e_\ell|^{3/2})\,.
    \end{equation}
    Thus $\eqref{eq:Q:a} = O(|e_\ell|^{1/2})$.
  
    Finally, we estimate \eqref{eq:Q:b} and \eqref{eq:Q:c}.
 Proceeding as above one also shows that $\|R_\ell \tfrac{1}{1+B_\ell}\| = O(|e_\ell|^{1/2})$.  Hence, in the decomposition \eqref{eq:BS-decomposition}
    for the vector
    \begin{equation*}
      \frac{1}{1+B_\ell} R_\ell \frac{1}{1+B_\ell}
      |V_\ell^{1/2}\rangle
    \end{equation*}
    and the operator
    \begin{equation*}
      \frac{1}{1+B_\ell} R_\ell \frac{1}{1+B_\ell}
      V_\ell^{1/2} \frac{1}{p^2-k^2},
    \end{equation*}
    we see that the only parts left to estimate are
    \begin{subequations}
      \begin{align}
        \label{eq:remainder:a}
        \frac{1}{e_\ell^2} P_\ell R_\ell P_\ell |V_\ell^{1/2}\rangle &=
        \frac{1}{e_\ell^2} \langle J_\ell\phi_\ell | R_\ell \phi_\ell
        \rangle \frac{\bigl< |V_\ell|^{1/2} \big| \phi_\ell\bigr>}
        {|\langle J_\ell \phi_\ell | \phi_\ell\rangle|^2} |\phi_\ell\rangle,\\
        \label{eq:remainder:b}
        \frac{1}{e_\ell^2} P_\ell R_\ell P_\ell V_\ell^{1/2}
        \frac{1}{p^2-k^2} &= \frac{1}{e_\ell^2} \langle
        J_\ell\phi_\ell | R_\ell \phi_\ell \rangle
        \frac{|\phi_\ell\rangle\langle \phi_\ell|
          |V_\ell|^{1/2}\frac{1}{p^2-k^2}} {|\langle J_\ell \phi_\ell
          | \phi_\ell\rangle|^2}.
      \end{align}
    \end{subequations}
    Using again Consequence~\ref{corollary:V-phi} and the pointwise bound on $r$ 
   we obtain $ |\langle J_\ell\phi_\ell | R_\ell \phi_\ell \rangle |= O(e_\ell^2)$.
   In particular, 
    we conclude that  the $L^2$-norm of the vector in \eqref{eq:remainder:a} is of order
    $O(|e_\ell|^{1/2})$.  The same argument  applies to the operator in
    \eqref{eq:remainder:b} after using \eqref{eq:convergence:GV}. 
    This shows that \eqref{eq:Q:b} and \eqref{eq:Q:c} are of order
    $O(|e_\ell|^{1/2})$, and completes the proof.
    \end{proof}

  The estimates \eqref{eq:convergence:Q:1} and
  \eqref{eq:convergence:Q:2} suggest that for small $\ell$ we may drop $Q_\ell$ in \textcircled{\scriptsize 2} in 
  \eqref{eq:resolvent}.  With the help of the identity \eqref{eq:convergence:1} and the expansion
  $\tfrac{1}{1+Q_\ell} = 1 - \frac{1}{1+Q_\ell} Q_\ell$ the second
  summand on the right  side of \eqref{eq:resolvent}  decomposes
  into two parts, namely
  \begin{align*}
  - \frac{1}{p^2-k^2}|V_\ell|^{1/2}
    \frac{1}{1+V_\ell^{1/2}\frac{1}{p^2-k^2}|V_\ell|^{1/2}}
    V_\ell^{1/2} \frac{1}{p^2-k^2} = {\rm I}_\ell + {\rm I\!I}_\ell,
  \end{align*}
  with
  \begin{align*}
    {\rm I}_\ell &= -
    \frac{1}{p^2-k^2}|V_\ell|^{1/2}\Bigl(1-\tfrac{1}{\tfrac{4\pi}{ik}+4\pi
      a(V_\ell)}\frac{1}{1+ B_\ell} |V_\ell^{1/2}\rangle\langle
    |V_\ell|^{1/2}|\Bigr)
    \frac{1}{1+B_\ell}V_\ell^{1/2} \frac{1}{p^2-k^2},\\
    {\rm I\!I}_\ell &=
    \frac{1}{p^2-k^2}|V_\ell|^{1/2}\Bigl(1-\tfrac{1}{\tfrac{4\pi}{ik}+4\pi
      a(V_\ell)}\frac{1}{1+ B_\ell} |V_\ell^{1/2}\rangle\langle
    |V_\ell|^{1/2}|\Bigr) \frac{1}{1+Q_\ell} Q_\ell \frac{1}{1+B_\ell}V_\ell^{1/2} \frac{1}{p^2-k^2}.
  \end{align*}
  The term $ {\rm I}_\ell$ contains the main part, whereas ${\rm I\!I}_\ell$ vanishes in operator norm for small $\ell$.
  This is the content of the
   following lemma, which  immediately implies the statement of Theorem~\ref{thm:convergence}. Its proof relies heavily on Lemmas~\ref{newlem} and~\ref{lemma:convergence:1}. 

  \begin{lemma}
    \label{lemma:convergence:2}
    \begin{subequations}
      \begin{align}
        \bigl\|{\rm I}_\ell + \tfrac{4\pi}{a(V_\ell)^{-1}+ ik }
        |g_k\rangle \langle g_k| \bigr\| &= O(|e_\ell|^{1/2})
        \label{eq:convergence:2:a}\\
        \| {\rm I\!I}_\ell \| &=
        O(|e_\ell|^{1/2}).\label{eq:convergence:2:b}
      \end{align}
    \end{subequations}
  \begin{proof}
    We first show \eqref{eq:convergence:2:a}. We can write
    \begin{align*}
      {\rm I}_\ell &= -
      \frac{1}{p^2-k^2}|V_\ell|^{1/2}\frac{1}{1+B_\ell}V_\ell^{1/2} \frac{1}{p^2-k^2} \\
      &\quad + \tfrac{1}{\tfrac{4\pi}{ik}+4\pi a(V_\ell)}
      \frac{1}{p^2-k^2}|V_\ell|^{1/2}\frac{1}{1+ B_\ell}
      |V_\ell^{1/2}\rangle\langle |V_\ell|^{1/2}| \frac{1}{1+B_\ell}V_\ell^{1/2} \frac{1}{p^2-k^2}.
    \end{align*}
    It turns out that both summands converge in operator norm to the 
    projector $|g_k\rangle \langle g_k|$,  multiplied by numbers which add
    up to $-\tfrac{-4\pi}{a^{-1}+ ik }$. 
    More precisely, we are
    going derive to the following asymptotic behavior  
    \begin{subequations}
      \begin{equation}
        \label{eq:convergence:4:a}
        \begin{split}
          &\Bigl\|\tfrac{1}{p^2-k^2} |V_\ell|^{1/2} \tfrac{1}{1+B_\ell} V_\ell^{1/2} \tfrac{1}{p^2-k^2} - 4\pi a(V_\ell)
          |g_k\rangle \langle g_k| \Bigr\| \\
          &\quad= \Bigl\|\tfrac{1}{p^2-k^2} |V_\ell|^{1/2}
          \tfrac{1}{1+B_\ell} V_\ell^{1/2} \tfrac{1}{p^2-k^2} -
          |g_k\rangle \bigr< |V_\ell|^{1/2} \big| \tfrac{1}{1+B_\ell}
          \big|V_\ell^{1/2}\bigr>\langle g_k| \Bigr\|\\
          &\quad\leq \Bigl\|\Bigl(\tfrac{1}{p^2-k^2} |V_\ell|^{1/2} -
          |g_k\rangle \bigr< |V_\ell|^{1/2} \big|\Bigr) \tfrac{1}{1+B_\ell} V_\ell^{1/2} \tfrac{1}{p^2-k^2}\Bigr\|\\
          &\qquad+ \Bigl\| |g_k\rangle \bigr< |V_\ell|^{1/2} \big|
          \tfrac{1}{1+B_\ell} \Bigl( V_\ell^{1/2}
          \tfrac{1}{p^2-k^2} - \big|V_\ell^{1/2}\bigr>\langle
          g_k|\Bigr) \Bigr\|\\
          &\quad = O(|e_\ell|^{1/2})
        \end{split}
      \end{equation}
      and
      \begin{equation}
        \label{eq:convergence:4:b}
        \begin{split}
          &\Bigl\|\tfrac{1}{p^2-k^2}|V_\ell|^{1/2}\tfrac{1}{1+ B_\ell} |V_\ell^{1/2}\rangle\langle |V_\ell|^{1/2}|
          \tfrac{1}{1+B_\ell}V_\ell^{1/2} \tfrac{1}{p^2-k^2}
          -\big(4\pi a(V_\ell)\big)^2 |g_k\rangle \langle
          g_k|
          \Bigr\| \\
          &\quad\leq \Bigl\|\Bigl( \tfrac{1}{p^2-k^2}|V_\ell|^{1/2}
          -|g_k\rangle\bigr< |V_\ell|^{1/2} \big| \Bigr) \tfrac{1}{1+ B_\ell} |V_\ell^{1/2}\rangle\langle |V_\ell|^{1/2}|
          \tfrac{1}{1+B_\ell}V_\ell^{1/2} \tfrac{1}{p^2-k^2}
          \Bigr\|\\
          &\qquad+ 4\pi |a(V_\ell)| \Bigl\| |g_k\rangle\langle
          |V_\ell|^{1/2}| \tfrac{1}{1+B_\ell} \Bigl(
          V_\ell^{1/2} \tfrac{1}{p^2-k^2} - \big|V_\ell^{1/2}\bigr> \langle
          g_k|\Bigr)
          \Bigr\|\\
          &\quad = O(|e_\ell|^{1/2}),
        \end{split}
      \end{equation}
\end{subequations}
      where we made use of the expression $a(V_\ell) = \frac 1{4\pi} \bigr< |V_\ell|^{1/2} \big| \tfrac{1}{1+B_\ell}
          \big|V_\ell^{1/2}\bigr>$ for the scattering length.
 The bounds \eqref{eq:convergence:4:a} and \eqref{eq:convergence:4:b} are in fact simple consequences of Lemma~\ref{newlem}. Eq.~\eqref{eq:convergence:4:a} follows immediately from \eqref{eq:convergence:3:a} and \eqref{eq:convergence:3:b}. To see \eqref{eq:convergence:4:b} we apply \eqref{eq:convergence:3:a} twice, once to the first vector in the first term on the right side, and once to the second.
 
     In order to show
    \eqref{eq:convergence:2:b} we simply  bound ${\rm I\!I}_\ell$ by
    \begin{align*}
      \|{\rm I\!I}_\ell \| &\leq \Bigl\|
      \tfrac{1}{p^2-k^2}|V_\ell|^{1/2}\Bigl(1-\tfrac{1}{\tfrac{4\pi}{ik}+4\pi
        a(V_\ell)}\tfrac{1}{1+ B_\ell}
      |V_\ell^{1/2}\rangle\langle |V_\ell|^{1/2}|\Bigr)
      \frac{1}{1+Q_\ell} \Bigr\| \Bigl\| Q_\ell
      \tfrac{1}{1+B_\ell}V_\ell^{1/2}
      \tfrac{1}{p^2-k^2} \Bigr\|\\
      &\leq O(|e_\ell|^{1/2})\,,
    \end{align*}
    where we used that the first term is uniformly bounded because of \eqref{eq:convergence:3:a}, whereas the second term vanishes like $O(|e_\ell|^{1/2})$
     thanks to \eqref{eq:convergence:Q:2}.
  \end{proof}
   \end{lemma}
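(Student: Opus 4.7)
The plan for part (a) is to use Lemma~\ref{newlem} to replace each occurrence of $\tfrac{1}{p^2-k^2}|V_\ell|^{1/2}$ inside ${\rm I}_\ell$ (and its mirror image $V_\ell^{1/2}\tfrac{1}{p^2-k^2}$ obtained by taking adjoints) by the rank-one surrogate $|g_k\rangle\langle|V_\ell|^{1/2}|$, at the cost of an error of $O(|e_\ell|^{1/2})$ per replacement. After the substitutions, both summands inside ${\rm I}_\ell$ collapse to scalar multiples of $|g_k\rangle\langle g_k|$, with the scalars expressible in terms of the scattering-length formula $4\pi a(V_\ell) = \langle|V_\ell|^{1/2}|\tfrac{1}{1+B_\ell}|V_\ell^{1/2}\rangle$.

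Concretely, I would split ${\rm I}_\ell = -A_\ell + c_\ell C_\ell$ where $c_\ell = (\tfrac{4\pi}{ik}+4\pi a(V_\ell))^{-1}$ and
$$A_\ell = \tfrac{1}{p^2-k^2}|V_\ell|^{1/2}\tfrac{1}{1+B_\ell}V_\ell^{1/2}\tfrac{1}{p^2-k^2},\qquad C_\ell = \tfrac{1}{p^2-k^2}|V_\ell|^{1/2}\tfrac{1}{1+B_\ell}|V_\ell^{1/2}\rangle\langle|V_\ell|^{1/2}|\tfrac{1}{1+B_\ell}V_\ell^{1/2}\tfrac{1}{p^2-k^2}.$$
A left replacement via \eqref{eq:convergence:3:b}, followed by a right replacement via its adjoint (which reduces to \eqref{eq:convergence:3:b} applied with $\bar k$, using the intertwining $(1+B_\ell)^{-1}J_\ell = J_\ell(1+B_\ell^*)^{-1}$), would give $A_\ell = 4\pi a(V_\ell)|g_k\rangle\langle g_k| + O(|e_\ell|^{1/2})$. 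An analogous double replacement with \eqref{eq:convergence:3:a} applied on both sides of the rank-one sandwich inside $C_\ell$ would yield $C_\ell = (4\pi a(V_\ell))^2|g_k\rangle\langle g_k| + O(|e_\ell|^{1/2})$. A short algebraic identity,
$$-4\pi a(V_\ell) + \tfrac{(4\pi a(V_\ell))^2}{(4\pi/ik)+4\pi a(V_\ell)} = -\tfrac{4\pi}{a(V_\ell)^{-1}+ik},$$
then closes (a).

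Part (b) is simpler. I would factor ${\rm I\!I}_\ell = M_\ell N_\ell$ with $N_\ell = Q_\ell\tfrac{1}{1+B_\ell}V_\ell^{1/2}\tfrac{1}{p^2-k^2}$, which satisfies $\|N_\ell\|=O(|e_\ell|^{1/2})$ by Lemma~\ref{lemma:convergence:1}, and $M_\ell$ the remaining prefactor containing $\tfrac{1}{1+Q_\ell}$. Since $\|Q_\ell\|\to 0$ by the same lemma, a Neumann series gives $\|(1+Q_\ell)^{-1}\|=1+o(1)$; the factor $\tfrac{1}{p^2-k^2}|V_\ell|^{1/2}$ is uniformly Hilbert-Schmidt by~\ref{ax:V}, and its rank-one correction inside $M_\ell$ is uniformly bounded thanks to Lemma~\ref{newlem}\eqref{eq:convergence:3:a} together with $\|V_\ell\|_{L^1}^{1/2}$ being bounded. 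Hence $M_\ell$ is uniformly bounded and (b) follows.

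The main obstacle will be the algebraic bookkeeping in part (a): one must apply Lemma~\ref{newlem} symmetrically on both sides of each summand, track how the scattering length enters as the scalar coefficient via its defining formula, and verify that the combined coefficient simplifies exactly to $-4\pi/(a(V_\ell)^{-1}+ik)$. The remaining technical subtlety is the right-hand replacement, which requires an adjoint version of Lemma~\ref{newlem} that is straightforward modulo the $J_\ell$-intertwining and passing from $k$ to $\bar k$.
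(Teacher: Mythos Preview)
Your proposal is correct and follows essentially the same route as the paper: split ${\rm I}_\ell$ into the two summands, replace the outer factors $\tfrac{1}{p^2-k^2}|V_\ell|^{1/2}$ (and their mirror images) by the rank-one surrogates via Lemma~\ref{newlem}, recognize the resulting scalars through the scattering-length formula and combine them by the algebraic identity you wrote down; and for ${\rm I\!I}_\ell$, factor off $Q_\ell(1+B_\ell)^{-1}V_\ell^{1/2}(p^2-k^2)^{-1}$ and invoke \eqref{eq:convergence:Q:2} while bounding the remaining prefactor uniformly. One small caution: the right-hand replacement does not literally ``reduce to \eqref{eq:convergence:3:b} applied with $\bar k$'', since Lemma~\ref{newlem} is stated for $\Im k>0$ while $\Im\bar k<0$; however, after the $J_\ell$-intertwining you mention, the needed bound follows from exactly the same kernel estimates as in the proof of Lemma~\ref{newlem} (the paper handles this step with the same level of informality).
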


\section{Proof of Consequences~\ref{corollary:projection-bound} and \ref{corollary:V-phi}}
\label{sec:proof-cor}

\begin{proof}[Proof of Consequence~\ref{corollary:projection-bound}]
 We pick some $\psi\in L^2(\mathbb{R}^3)$ and set
  \begin{equation}
    \label{eq:varphi}
    \varphi = \frac{1}{1+J_\ell
      X_\ell}(1-P_\ell)\psi = \frac{1}{J_\ell+
      X_\ell}J_\ell(1-P_\ell)\psi\,.
  \end{equation}
  Below, we are going to show that there exists a constant $c>0$ such that for small
  enough $\ell$
  \begin{equation}
    \label{eq:aah}
    \langle \varphi | (1-X_\ell^-) \varphi \rangle 
    \geq c \|\varphi\|^2_{L^2}.
  \end{equation}
In combination with Lemma~\ref{lemma:BS-estimate} this inequality  yields 
  \begin{align*}
    \sqrt{2} \|\varphi\|\|(J_\ell+X_\ell)\varphi\| &\geq \langle
    \varphi| (1-X_\ell^-) \varphi \rangle \geq c \|\varphi\|^2,
  \end{align*}
  which further implies that
  \begin{equation*}
    \|\psi\| \geq \|J_\ell(1-P_\ell)\psi\|
    = \|(J_\ell+X_\ell)\varphi\|
    \geq
    \frac c {\sqrt{2}} \|\varphi\| = \frac c {\sqrt{2}} \|(1+B_\ell)^{-1}(1-P_\ell)\psi\|\,,
  \end{equation*}
  proving the statement.
  
  It remains to show the inequality \eqref{eq:aah}. 
  To this aim we denote by $\phi_\ell^-$ the eigenvector corresponding to the smallest eigenvalue  $e_\ell^-$ of $1-X_\ell^-$ and by $P_{\phi_\ell^-}$ the orthogonal
  projection onto $\phi_\ell^-$.  By assumption, the Birman-Schwinger operator
  $X_\ell^-$ corresponding to the potential $V_\ell^-$ has only one
  eigenvalue close to $1$. All other
  eigenvalues are separated from $1$ by a gap of order one. Hence
  there exists $c_1 > 0$ such that
  \begin{equation*}
    (1-X_\ell^-)(1-P_{\phi_\ell^-}) \geq c_1
  \end{equation*}
  and, therefore,
  \begin{align*}
    \langle \varphi | (1-X_\ell^-) \varphi \rangle
    &\geq c_1 \langle \varphi | (1-P_{\phi^-_\ell}) \varphi \rangle 
    + e_\ell^- \langle \varphi | P_{\phi^-_\ell} \varphi \rangle\\
    &= c_1 \| \varphi \|_{L^2}^2 + (e_\ell^- - c_1)
    \langle \varphi | P_{\phi^-_\ell} \varphi \rangle.
  \end{align*}
 With $P_{J_\ell \phi_\ell} = | J_\ell \phi_\ell\rangle \langle
  J_\ell \phi_\ell|$ being the orthogonal projection onto
  $J_\ell\phi_\ell$ we can write 
  \begin{equation*}
    \varphi = (1-P_{J_\ell\phi_\ell})\varphi,
  \end{equation*}
 simply for the reason that, because of \eqref{eq:varphi} and the fact that $P_\ell$ commutes with $B_\ell$,
  \begin{align*}
P_{J_\ell \phi_\ell} \varphi =     P_{J_\ell \phi_\ell} (1+B_\ell)^{-1}(1-P_\ell) \psi= P_{J_\ell
      \phi_\ell} (1-P_\ell) (1+B_\ell)^{-1} \psi= 0\,.
  \end{align*}
  Consequently,
  \begin{align*}
    |\langle \varphi | P_{\phi^-_\ell} \varphi \rangle|
    &=
    |\langle \varphi | (1-P_{J_\ell\phi_\ell}) P_{\phi^-_\ell} \varphi \rangle|
    \leq
    \|\varphi\|^2_{L^2} \|(1-P_{J_\ell\phi_\ell})P_{\phi^-_\ell} \|\\
    &=
    \|\varphi\|^2_{L^2}\|(1-P_{J_\ell\phi_\ell})\phi_\ell^-\|^2
    = 
    \|\varphi\|^2_{L^2} \|(1-P_{\phi_\ell^-})J_\ell\phi_\ell\|^2\,.
  \end{align*}
  To estimate $\|(1-P_{\phi_\ell^-})J_\ell\phi_\ell\|$, we apply
  Lemma~\ref{lemma:BS-estimate} to $\phi_\ell$ and obtain
  \begin{align*}
    \sqrt{2}\, e_\ell = \sqrt{2}\, \|(J_\ell + X_\ell)\phi_\ell\|
    &\geq \langle \phi_\ell|(1-X_\ell^-)\phi_\ell\rangle
    = \langle J_\ell \phi_\ell|(1-X_\ell^-)J_\ell \phi_\ell\rangle \\
    &= e^-_\ell|\langle J_\ell\phi_\ell|\phi_\ell^-\rangle|^2 +
    \langle (1-P_{\phi_\ell^-})J_\ell
    \phi_\ell|(1-X_\ell^-)(1-P_{\phi_\ell^-})J_\ell \phi_\ell\rangle\\
    &\geq
    e^-_\ell|\langle J_\ell\phi_\ell|\phi_\ell^-\rangle|^2
    +c_1\|(1-P_{\phi_\ell^-})J_\ell \phi_\ell\|^2\,.
  \end{align*}
  This shows that 
  $\|(1-P_{J_\ell\phi_\ell})P_{\phi^-_\ell} \| = O(|e_\ell|^{1/2})$ and consequently \eqref{eq:aah} holds for small enough $\ell$.
\end{proof}

\begin{proof}[Proof of Consequence~\ref{corollary:V-phi}]
(i) Simply bound $|x|\leq \frac12 (|e_\ell| + |x|^2/|e_\ell|)$ and use 
Assumptions \ref{ax:V} and  \ref{ax:V-x}.

(ii)  Lemma~\ref{lemma:BS-estimate} applied to $\phi_\ell$ implies that 
  \begin{equation}
    \label{eq:lim:X}
    \bigl< \phi_\ell\big| (V_\ell^+)^{1/2}
    \tfrac{1}{p^2}(V_\ell^+)^{1/2} \phi_\ell\bigr> \leq \sqrt{2}
    |e_\ell| + |e_\ell^-|, \qquad
    \bigl< \phi_\ell\big| \big(1- (V_\ell^-)^{1/2}
    \tfrac{1}{p^2}(V_\ell^-)^{1/2}\big) \phi_\ell\bigr> \leq \sqrt{2} |e_\ell|\,,
  \end{equation}
  where we used $1- (V_\ell^-)^{1/2}\frac{1}{p^2}(V_\ell^-)^{1/2} \geq
  e_\ell^-$. Note, that by \ref{ax:e} $|e_\ell^-| = O(e_\ell)$.  Now
  \ref{corollary:V-phi:1} follows from the following argument.  Because of  \eqref{eq:BS-estimate:2} we know that
  \begin{align*}
    \sqrt{2}|e_\ell | &= \sqrt{2}\|(J_\ell+X_\ell)\phi_\ell \|_{L^2} \geq
    \langle \phi_\ell^+| (1+X_\ell^+) \phi_\ell^+\rangle
    + \langle \phi_\ell^-| (1-X_\ell^-) \phi_\ell^-\rangle\\
    &= \| \phi_\ell^+ \|_{L^2}^2 + \langle \phi_\ell^+| X_\ell^+
    \phi_\ell^+\rangle + \langle \phi_\ell^-| (1-X_\ell^-) \phi_\ell^-
    \rangle,
  \end{align*}
  where $\phi_\ell = \phi_\ell^+ + \phi_\ell^-$ with
  $\supp(\phi_\ell^-) \subseteq \supp(V_\ell^-)$ and
  $\supp(\phi_\ell^+) \cap \supp(V_\ell^-) = \emptyset$.  Using that
  $1-X_\ell^-$ has $e_\ell^- = O(e_\ell)$ as lowest eigenvalue, we
  conclude that
  \begin{equation}
    \label{eq:J:expectation}
    \bigl< \tfrac{1+J_\ell}{2} \phi_\ell\big|\phi_\ell\bigr> = \| \phi_\ell^+ \|_{L^2}^2 = O(e_\ell).
  \end{equation}

(iii),(iv)  For $q=0,1$ we
  evaluate  
  \begin{align*}
    \int_{\mathbb{R}^3} |x|^q\,|V_\ell|^{1/2}\,|\phi_\ell|\ddd{3}x &=
    \bigl< |V^+_\ell|^{1/2} |\cdot|^q\big| |\phi_\ell|\bigr> + \bigl<
    (V^-_\ell)^{1/2} |\cdot|^q \big| |\phi_\ell|
    \bigr>\\
    &= \bigl< |V^+_\ell|^{1/2}|\cdot|^q \big|
    \frac{1}{2}(1+J_\ell)|\phi_\ell|\bigr> +
    \bigl<  (V^-_\ell)^{1/2}|\cdot|^q \big| |\phi_\ell| \bigr>\\
    &\leq \|V^+_\ell |\cdot|^{2q} \|_{L^1}^{1/2} \langle \phi_\ell |
    \tfrac{1}{2}(1+J_\ell)\phi_\ell\rangle^{1/2} + \||\cdot|^{2q}
    V_\ell^-\|_{L^1}^{1/2},
  \end{align*}
  which is $O(|e_\ell|^{1/2})$ for $q=0$ and $O(|e_\ell|^{3/2})$ for $q=1$
  by assumption \ref{ax:V-x} and \eqref{eq:J:expectation}.
\end{proof}

\end{document}